\newtheorem{thm}{Theorem}
\newtheorem{lem}{Lemma}
\def\i{\mathrm{i}}
\title{A Frequency-Domain NonStationarity Test for  dependent data}
\author{ M. Ould Haye$^1$\footnote{corresponding author : {mohamedouhaye@cunet.carleton.ca}} \and
  A. Philippe$^2$
}
\date{$^1$\small  School of Mathematics and Statistics. \\
Carleton University, 1125 Colonel By Dr. Ottawa, ON, Canada, K1S 5B6
\\ $^2$  Laboratoire de Math\'{e}matiques Jean Leray, UMR6629 CNRS \\
2 rue de la Houssiniere, Nantes Université, 44 322 Nantes France.}
\begin{document}
\maketitle
\begin{abstract}
Distinguishing long-memory behaviour from nonstationarity is challenging, as both produce slowly decaying sample autocovariances. Existing stationarity tests either fail to account for long-memory processes or exhibit poor empirical size, particularly near the boundary between stationarity and nonstationarity. We propose a new, parameter-free testing procedure based on the evaluation of periodograms across multiple epochs. The limiting distributions derived here are obtained under stationarity and nonstationarity assumptions and analytically tractable, expressed as finite sums of weighted independent $\chi^2$ random variables. Simulation studies indicate that the proposed method performs favorably compared to existing approaches.
\end{abstract}
\textbf{Keywords :} Long memory, dependence, stationarity, frenquency

\section{Introduction}
Distinguishing long--memory dependence from genuine nonstationarity is a longstanding challenge, since both generate slowly decaying sample autocovariances and similar low--frequency behaviour. Classical unit--root and stationarity tests, such as those of \cite{pptest}  and \citet{MR3505787}, are primarily designed for weakly dependent or short--memory processes. The $V/S$ test of \cite{MR2328526} was the first to accommodate long memory within a stationarity framework, but it tends to be oversized when the memory parameter $d$ approaches $1/2$, and thus loses reliability near the stationarity boundary where dependence is strongest.

\medskip
Both the Dickey--Fuller and the $V/S$ tests are \emph{time--domain} procedures. Their statistics are based on partial--sum or cumulative representations of the series, leading to limiting distributions expressed as stochastic integrals of Brownian or fractional Brownian motions. These stochastic--integral limits are analytically intractable, and practical implementation requires simulation or pre--computed critical values.

\medskip
In this work we propose a \emph{frequency--domain Dickey--Fuller--type test} that remains valid under strong dependence while yielding an explicit, tractable limit law. The test discriminates between an integrated process of order one $I(1)$ and a stationary process $I(0)$: that is, with $d$ being the memory parameter of the data generating process (DGP) $X_t$,  we consider the one-sided hypothesis test
\[
H_0:\; 3/2>d \ge \tfrac12  (X_t\text{ is nonstationary, }I(1)), 
\quad  
H_1:\; -1/2<d < \tfrac12  (X_t\text{ is stationary, }I(0)).
\]
The nonstationarity cutoff value $d=1/2$ corresponds to the so-called $1/f$ noise model and marks the border between I(1) and I(0) regimes, \cite{Robinson} refers to this as "just nonstationary regime".
Our framework covers both regimes with memory parameters in the full range $d\in(-1/2,\,3/2)$ and does not rely on any \emph{semi--parametric assumption} on the spectral density such as $f(\lambda)=|\lambda|^{-2d}f^*(\lambda)$, with $f^*$  continuous, bounded, or slowly varying. \\
The key theoretical result, Theorem 1, establishes the limiting distribution of a properly normalized periodogram under both $I(0)$ and $I(1)$ settings. The limit is expressed as a finite weighted sum of independent $\chi^2(1)$ random variables, with weights determined by the memory parameter through an explicit positive-definite covariance matrix. This tractable representation stands in sharp contrast to the stochastic-integral limits typically arising in time-domain Dickey–Fuller tests. Theorem 1 is of broader interest, as it unifies the asymptotic behaviour of the normalized periodogram for stationary and integrated processes, thereby providing the theoretical foundation for a new frequency-domain stationarity test - analogous to the time-domain Dickey–Fuller test, yet it extends well beyond the traditional AR($p$) versus unit-root framework by accommodating  tests of short- and long-range dependence against general $I(1)$ nonstationarity.\\
We develop a nonstationarity parameter-free statistical test that does not require estimating the memory parameter $d$. Unlike the  V/S test, our test detects the stationarity for all $d$ in (-1/2,1/2) whereas the  V/S test requires $d$ to be in a compact $[a,b]\subset(-1/2,1/2)$ and a plug-in estimator has to be used. Another limitation in V/S consists in it  being oversized, with its empirical size significantly much larger than the nominal level, for $d\ge.4$, and therefore one cannot choose $b$ too close to 1/2. In contrast, the proposed test does not suffer form  this issue maintaining  empirical sizes close to the nominal level even for $d$  close to 1/2. \\\\
From a sample $X_1,\ldots,X_n$ of the process $(X_t)$, we are
interested in building a testing procedure to discriminate between dependence and non-stationarity. \\
The proposed statistic is built from periodograms taken at different epochs. More precisely, the procedure is as follows:
we split our initial sample $X_1,\ldots,X_n$ into $m$ blocks (or
epochs), each of size $\ell$, and  we construct  the periodogram
$I_{n,i}$ on the $i$th block $X_{(i-1) \ell+1},\ldots,X_{i\ell }$.
Let
\begin{equation}\label{final}
Q_{n,m}(s,d)=\left(m^{-2d}\right)\sum_{j=1}^s \frac{I_n(\lambda_j)}{\frac{1}{m}\sum_{h=1}^mI_{n,h}(\lambda'_j)},
\end{equation}
where $\lambda_j=2\pi j/n$, and  $\lambda'_j=2\pi j/\ell$, $j=1,\ldots,s$, for fixed $s$, are the Fourier frequencies, and 
\begin{equation}\label{perio}
I_n(\lambda_j)=\frac{1}{2\pi n}\left\vert\sum_{t=1}^nX_te^{\i t\lambda_j}\right\vert^2,\qquad I_{n,h}(\lambda_j')=\frac{1}{2\pi \ell}\left\vert\sum_{t=(h-1)\ell+1}^{h\ell}X_te^{\i t\lambda_j'}\right\vert^2.
\end{equation}
The statistic $Q_{n,m}(s,d)$ can be viewed as sum of self normalized periodograms $I_n(\lambda_j)$, $j=1,\ldots,s$.\\
In our 2018 paper (\cite{Gromykov}), we introduced the statistic $Q_{n,m}(s,0)$ to test short-memory versus long-memory behaviour and established its asymptotic distribution under the short-memory assumption $d=0$. Extending these results to the long or anticipative memory range ($-1/2<d<1/2$) and to the nonstationary regime ($1/2\le d<3/2$) requires a fundamentally new approach and, in particular, the identification of an appropriate normalization ensuring a nondegenerate limit. This step is crucial not only in the nonstationary $I(1)$ framework---where the spectral density does not exist and standard frequency-domain arguments fail---but also within the stationary regime, where long or anticipative dependence alters both the scaling and the very nature of the limit, as the latter inherits the dependence structure of the process. The key modification consists in introducing the normalizing factor $m^{-2d}$, as in (\ref{final}), which enables a unified asymptotic treatment of both stationary and integrated processes. This adjustment yields distinct limiting distributions reflecting the effects of long or anticipative dependence in the stationary $I(0)$ case and of integration in the nonstationary $I(1)$ setting, thereby fundamentally shaping the asymptotic properties of the properly normalized periodograms in (\ref{perio}).
  
We assume that $m=m(n),\textrm{ and } \ell=\ell(n)\to\infty$ as $n\to\infty$. It is  important to emphasize the fact that $m$ and $\ell$ increase with $n$ and are not constant, and that $m=n/\ell$.
We are simply using notation $m$ and $\ell$ rather than $m(n)$ and
$\ell(n)$ only for the sake of simplicity. 
The rest of the paper is organized as follows. 
Section \ref{sec:2} contains main limiting theorems related to the statistic $Q_{n,m}(s,d)$, including its behaviour under a wide range of nonstationary alternatives. 
In Section \ref{simulation} we propose a  test statistic to detect nonstationarity  based on the asymptotic properties of the statistic $Q_{n,m}(s,d)$ obtained in  Section \ref{sec:2} and we present a Carlo study  to illustrate the performance of our proposed method.  
Section \ref{proofs} contains   proofs of the lemmas 1-4.
\section{Main Results}\label{sec:2}

We say that a stochastic process \((X_t)_{t \ge 0}\) is an \(I(0)\) process with memory parameter \(d \in [-\tfrac{1}{2}, \tfrac{1}{2})\) if it can be written as  
\begin{equation}\label{linear}
X_t = \sum_{j=0}^\infty a_j\, \epsilon_{t-j},
\end{equation}
where \((\epsilon_t)\) are independent and identically distributed random variables with zero mean, variance \(\sigma_\epsilon^2 = \mathbb{E}(\epsilon_1^2)\), and finite fourth moment \(\eta = \mathbb{E}(\epsilon_1^4)\). 
The coefficients \((a_j)\) determine the degree of memory through their asymptotic behavior:
\begin{equation}\label{summable}
\sum_{j=0}^\infty |a_j| < \infty, \quad \sum_{j=0}^\infty a_j\neq0,\qquad\text{when } d = 0,
\end{equation}
\begin{equation}\label{positived}
a_j \sim c(d)\, j^{-1+d}, \quad \text{for } 0 < d < \tfrac{1}{2},
\end{equation}
\begin{equation}\label{negatived1}
\sum_{j=0}^\infty a_j = 0, \qquad 
a_j = c(d)\, j^{-1+d}\,(1 + O(j^{-1})), \quad \text{for } -\tfrac{1}{2} \le d < 0,
\end{equation}
where \(c(d) > 0\).  

We say that \((X_t)\) is an integrated process of order one, or \(I(1)\), with memory parameter \(1/2 \le d < 3/2\), if its first-differenced process
\[
Y_t = X_t - X_{t-1}
\]
is an \(I(0)\) process with memory parameter \(d - 1\).

Condition (\ref{summable}), imposed when $d=0$, implies that 
 $$
 \sum_{h=-\infty}^\infty\vert\gamma(h)\vert<\infty.
 $$
 According to Proposition 3.2.1. of \cite{MR2977317}, when $d\neq 0$ the conditions \eqref{positived} and   (\ref{negatived1}) imply that the 
 \begin{equation}\label{gammad}
\gamma(h)\sim C(d) h^{2d-1},\qquad\textrm{as }h\to\infty,\qquad C(d)=\sigma^2_\epsilon c(d)B(d,1-2d),
 \end{equation}
 where $B$ is the $\beta$ function, and 
 $a_n\sim b_n$ means that $a_n/b_n\to1$ as $n\to\infty$.

In what  follows, we will focus on the statistic $Q_{n,m}(s,d)$ defined in (\ref{final}). In the next theorem  we give the  asymptotic
distribution of $Q_{n,m}(s,d)$ for I(1) and (0) processes. Let
\begin{equation}\label{const delta}
\delta(u)=\begin{cases}8\sigma^2_\epsilon c(-1/2)&\textrm{if }u=-1/2,\\\\
   \frac{C(u)}{u(2u+1)}&\textrm{if }-1/2<u<1/2.
\end{cases}
\end{equation}
 \begin{thm}\label{Th1}
Let $(X_t)$ be an $I(0)$ or $I(1)$ stochastic process with memory parameter $d \in (-1/2,\,3/2)$. Then,
\begin{equation}\label{d limit}
Q_{n,m}(s,d) \;\overset{\mathcal{D}}{\longrightarrow}\;
Q(s,d) = \sum_{i=1}^{2s} \zeta_i(d)\, Q_i, 
\end{equation}
where $\{Q_i\}$ are i.i.d.\ $\chi^2(1)$ random variables, and $\{\zeta_i(d)\}$ are the eigenvalues of the  matrix
$\Sigma(d)D^{-1}$, where $\Sigma(d)$ is the block matrix
\begin{equation}\label{Sigma}
\Sigma(d) =
\begin{pmatrix}
\Sigma^{(c)} & 0\\[3pt]
0 &\Sigma^{(s)}
\end{pmatrix},
\end{equation}
and $D$ is the diagonal matrix with diagonal entries
$$D_{ii} =\left(\Sigma^{(c)}_{ii}(d) + \Sigma^{(s)}_{ii}(d)\right),\qquad i=1,\ldots,s,$$
and for $i,j=1,\ldots,s$, 
\[
\Sigma^{(c)}_{ij}(d) =
\begin{cases}
\displaystyle \frac{\delta(-1/2)}{2}\int_{[0,1]^2}\!\cos(2\pi i x)\cos(2\pi j y)\,
\bigl(-\log|x-y|\bigr)\,dx\,dy,
& \text{if } d=\tfrac{1}{2},\\[10pt]
\displaystyle\frac{\delta(d-1)}{2}\int_{[0,1]^2}\!\cos(2\pi i x)\cos(2\pi j y)\,
|x-y|^{2d-1}\,dx\,dy,
& \text{if } 1/2<d<3/2,\\[10pt]
-\delta(d)\Bigg[a(d)+
2\pi^2 i j\displaystyle\int_{[0,1]^2}\sin(2\pi ix)\sin(2\pi jy)\vert x- y\vert^{2d+1}dxdy\Bigg] 
& \text{if } -1/2<d<1/20,
\end{cases}
\]
where
$$
a(d)=1-((2d+1)\int_0^1x^{2d}\left(\cos(2\pi ix)+\cos(2\pi jx)\right)dx,
$$
and similarly,
\[
\Sigma^{(s)}_{ij}(d) =
\begin{cases}
\displaystyle\frac{\delta(-1/2)}{2} \int_{[0,1]^2}\!\sin(2\pi i x)\sin(2\pi j y)\,
\bigl(-\log|x-y|\bigr)\,dx\,dy,
& \text{if } d=\tfrac{1}{2},\\[10pt]
\frac{\delta(d-1)}{2}\displaystyle\int_{[0,1]^2}\!\sin(2\pi i x)\sin(2\pi j y)\,
|x-y|^{2d-1}\,dx\,dy,
& \text{if } 1/2<d<3/2,\\[10pt]
-\delta(d)(2\pi^2 ij)\displaystyle\int_{[0,1]^2}\!\cos(2\pi i x)\cos(2\pi j y)\,
|x-y|^{2d+1}\,dx\,dy,
& \text{if } -1/2<d<1/2.
\end{cases}
\]
\end{thm}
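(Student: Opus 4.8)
The plan is to decompose each periodogram into its cosine and sine parts and to analyse the numerator and denominator of each summand in (\ref{final}) separately. Writing $C_n(\lambda_j)=\sum_{t=1}^n X_t\cos(t\lambda_j)$ and $S_n(\lambda_j)=\sum_{t=1}^n X_t\sin(t\lambda_j)$, we have $I_n(\lambda_j)=\frac{1}{2\pi n}\bigl(C_n(\lambda_j)^2+S_n(\lambda_j)^2\bigr)$. In both regimes the correct scale of these sums is $n^{d+1/2}$: for $I(0)$ because $\Var\bigl(\sum_{t=1}^n X_t\bigr)\asymp n^{1+2d}$, and for $I(1)$ because the partial sums rescale by $n^{d-1/2}$ to a fractional Brownian motion. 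Hence $I_n(\lambda_j)$ is of order $n^{2d}$ and each block periodogram $I_{n,h}(\lambda'_j)$ of order $\ell^{2d}$, so that the prefactor cancels: $m^{-2d}(n/\ell)^{2d}=1$ because $m=n/\ell$. Thus the $j$th summand reduces to $\bigl(n^{-2d}I_n(\lambda_j)\bigr)\big/\bigl(\ell^{-2d}m^{-1}\sum_{h=1}^m I_{n,h}(\lambda'_j)\bigr)$, and I would prove (a) a joint central limit theorem for the numerators and (b) a law of large numbers driving each denominator to a deterministic constant, then combine them through Slutsky's lemma and the continuous mapping theorem.

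For (a), setting $\tilde C_j=(2\pi)^{-1/2}n^{-d-1/2}C_n(\lambda_j)$ and $\tilde S_j=(2\pi)^{-1/2}n^{-d-1/2}S_n(\lambda_j)$, so that $n^{-2d}I_n(\lambda_j)=\tilde C_j^2+\tilde S_j^2$, I would show that $\mathbf V=(\tilde C_1,\dots,\tilde C_s,\tilde S_1,\dots,\tilde S_s)$ converges to $\mathcal N(0,\Sigma(d))$. In the $I(0)$ case this is a CLT for linear functionals of the linear process (\ref{linear}), whose covariances are Riemann sums of $\sum_{s,t}\gamma(s-t)\cos(s\lambda_i)\cos(t\lambda_j)$; since $\gamma(h)\sim C(d)|h|^{2d-1}$ with $2d-1<0$, the naive kernel $|x-y|^{2d-1}$ is not integrable when $d<0$, so the decisive step is to integrate by parts twice, transferring the derivatives onto the trigonometric factors. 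This raises the exponent to $|x-y|^{2d+1}$, turns $\cos$ into $\sin$, produces the factor $2\pi^2 ij$, and collects the boundary contributions into $a(d)$, yielding exactly the $-1/2<d<1/2$ entries, with $\delta(d)=C(d)/\{d(2d+1)\}$ absorbing the two factors $1/(2d)$ and $1/(2d+1)$. In the $I(1)$ case I would invoke the functional limit theorem $n^{-(d-1/2)}X_{\lfloor n\cdot\rfloor}\Rightarrow B_H$ with $H=d-1/2$ and the continuous mapping theorem to identify the limits with $\int_0^1 B_H(x)\cos(2\pi jx)\,dx$ and $\int_0^1 B_H(x)\sin(2\pi jx)\,dx$; in $\mathbb E[B_H(x)B_H(y)]=\tfrac12(x^{2H}+y^{2H}-|x-y|^{2H})$ the first two terms drop out since $\int_0^1\cos(2\pi jy)\,dy=\int_0^1\sin(2\pi jy)\,dy=0$ for integer $j\ge1$, leaving the kernels $|x-y|^{2d-1}$, and the boundary $d=1/2$ (where $2d-1=0$) produces the $-\log|x-y|$ kernel as its degenerate limit. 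The off-diagonal cosine--sine block of $\Sigma(d)$ vanishes by the change of variables $x\mapsto1-x$, $y\mapsto1-y$, under which the kernel is invariant while $\cos(2\pi ix)\sin(2\pi jy)$ changes sign.

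Step (b) is where I expect the main obstacle to lie. I would show that $\ell^{-2d}m^{-1}\sum_{h=1}^m I_{n,h}(\lambda'_j)\overset{P}{\longrightarrow}D_{jj}=\Sigma^{(c)}_{jj}(d)+\Sigma^{(s)}_{jj}(d)$: the expectation converges to this value by the very computation of (a) applied to a single block of length $\ell$, so the content is a weak law of large numbers. The difficulty is that under strong dependence the $m$ blocks are far from independent and the covariances $\gamma(h)$ decay too slowly for a crude second-moment argument to close. I would instead bound $\Var\bigl(\ell^{-2d}m^{-1}\sum_h I_{n,h}(\lambda'_j)\bigr)$ by estimating the cross-block covariances of the block periodograms, exploiting the cancellation produced by the oscillating factors $e^{\i t\lambda'_j}$ at the fixed Fourier frequency $\lambda'_j=2\pi j/\ell$, and show that the averaged covariance is $o(1)$ so that the variance of the average vanishes.

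Finally, combining (a) and (b) by Slutsky's lemma, the $j$th summand converges jointly to $(\tilde C_j^2+\tilde S_j^2)/D_{jj}$, so that $Q_{n,m}(s,d)\overset{\mathcal D}{\longrightarrow}\mathbf V^\top D^{-1}\mathbf V$, where $D=\mathrm{diag}(D_{11},\dots,D_{ss},D_{11},\dots,D_{ss})$ and $\mathbf V\sim\mathcal N(0,\Sigma(d))$. Writing $\mathbf V=\Sigma(d)^{1/2}\mathbf Z$ with $\mathbf Z\sim\mathcal N(0,I_{2s})$ diagonalizes the quadratic form: its eigenvalues equal those of $\Sigma(d)^{1/2}D^{-1}\Sigma(d)^{1/2}$ which, $\Sigma(d)$ being invertible, are precisely the eigenvalues $\zeta_i(d)$ of $\Sigma(d)D^{-1}$. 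This gives $Q(s,d)=\sum_{i=1}^{2s}\zeta_i(d)\,Q_i$ with $Q_i$ i.i.d.\ $\chi^2(1)$, the positive-definiteness of $\Sigma(d)$ ensuring that the weights are well defined.
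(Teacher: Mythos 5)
Your proposal is correct and follows essentially the same route as the paper: a joint CLT for the cosine/sine-weighted sums with limiting covariance $\Sigma(d)$ (the paper's Lemma \ref{finite dim}), an $L^2$ law of large numbers sending each block-periodogram average to $D_{jj}$ (Lemma \ref{denom}), and then Slutsky plus diagonalization of the Gaussian quadratic form, with positive definiteness of $\Sigma(d)$ (Lemma \ref{invertible}) guaranteeing the weights are well defined. The only cosmetic differences are that the paper identifies the $I(1)$ covariances by direct Riemann-sum computation on $\mathbb{E}(S_k(Y)S_{k'}(Y))$ and a linear-form CLT rather than your fBm functional limit theorem, and performs the summation by parts $X_k=S_k-S_{k-1}$ on the finite sums rather than integrating by parts in the limit; both yield the same kernels and boundary terms.
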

\begin{proof}

Let $\lambda_1,\ldots,\lambda_s$ be the $s$ first  Fourier frequencies. 
With $Y_k=X_k-X_{k-1}$ being the differenced process, we have
$$
X_k=\sum_{k=1}^kY_t+X_0.
$$
Since, for any positive integer $u,s$ and fixed $j=1,\ldots,s$
\begin{equation}\label{sumzero2}
\sum_{t=1}^ue^{2\pi \i tj/u}=0,
\end{equation}
the periodograms built from $(X_k)_{1\le k\le n}$ and $(X_k-X_0)_{1\le i\le n}$ are the same and hence the statistics $Q_{n,m}(s,d)$ built from $X_k$ and $S_k(Y):=Y_1+\cdots+Y_k$ are equal, and therefore
convergence (\ref{d limit}) will follow from establishing the following lemmas.
\end{proof}
\begin{lem}\label{invertible}
The matrices $\Sigma^{(c)}$ and $\Sigma^{(s)}$ are positive definite, and hence $\Sigma(d)$ is positive definite as well.
\end{lem}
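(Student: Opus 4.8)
The plan is to realise each quadratic form $v^{\top}\Sigma^{(c)}v$ and $v^{\top}\Sigma^{(s)}v$ as a one–dimensional energy integral and to invoke the conditional positive definiteness of the logarithmic and Riesz kernels. Fix $v=(v_1,\dots,v_s)^{\top}\in\mathbb{R}^{s}\setminus\{0\}$ and put $g(x)=\sum_{i=1}^{s}v_i\cos(2\pi i x)$ and $h(x)=\sum_{i=1}^{s}v_i\sin(2\pi i x)$. Because every index satisfies $i\ge 1$, each basis function has zero mean, so $\int_0^1 g=\int_0^1 h=0$; and by orthogonality of $\{\cos(2\pi i\cdot)\}_{i=1}^{s}$ and of $\{\sin(2\pi i\cdot)\}_{i=1}^{s}$ in $L^2[0,1]$, neither $g$ nor $h$ is the zero function when $v\neq 0$. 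Summing the defining formulas over $i,j$ factorises the trigonometric factors inside the double integrals, so that, up to the explicit constant $\delta(\cdot)$,
\[
v^{\top}\Sigma^{(c)}v=\kappa\!\int_{[0,1]^2}\! g(x)g(y)\,K_d(x-y)\,dx\,dy,\qquad v^{\top}\Sigma^{(s)}v=\kappa\!\int_{[0,1]^2}\! h(x)h(y)\,K_d(x-y)\,dx\,dy,
\]
where $K_d(t)=-\log|t|$ for $d=\tfrac12$ and $K_d(t)=|t|^{2d-1}$ for $\tfrac12<d<\tfrac32$. The statement thus reduces to the strict definiteness of these energies on mean–zero functions.

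For the two $I(1)$ regimes I would use the classical potential–theoretic fact that the logarithmic kernel $-\log|t|$ and the Riesz kernels $-|t|^{\alpha}$ with $0<\alpha<2$ are \emph{conditionally strictly positive definite}: the energy $\int\!\int\phi(x)\phi(y)K(x-y)\,dx\,dy$ is strictly positive for every nonzero $\phi$ of total mass zero. Equivalently, $|t|^{\alpha}$ ($0<\alpha<2$) is a strictly negative definite function in Schoenberg's sense; strictness is most quickly read off from the tempered Fourier transform of $-|t|^{\alpha}$, whose regular part is a strictly positive multiple of $|\omega|^{-\alpha-1}$, the mean–zero condition annihilating the atom at the origin. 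For $d=\tfrac12$ one may instead expand $-\log|2\sin\pi t|=\sum_{k\ge 1}k^{-1}\cos(2\pi k t)$, whose Fourier coefficients are all positive. Since $\int_0^1 g=0$ and $g\neq 0$, the energy is nonzero with a definite sign, and the sign of the prefactor $\delta(\cdot)$—pinned down by its definition through the covariance constant $C(\cdot)$ so as to reproduce the fractional–Brownian covariance $-\tfrac12|x-y|^{2H}$, $H=d-\tfrac12$, of the rescaled partial sums—yields $v^{\top}\Sigma^{(c)}v>0$ and $v^{\top}\Sigma^{(s)}v>0$.

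The $I(0)$ regime $-\tfrac12<d<\tfrac12$ is the main obstacle and demands more care. Here the cosine block is assembled from \emph{sine} integrals and the sine block from \emph{cosine} integrals, each carrying the factor $2\pi^2 ij$ and the boundary term $a(d)$—the unmistakable footprint of an integration by parts. Using $2\pi i\sin(2\pi i x)=-\tfrac{d}{dx}\cos(2\pi i x)$ I would write
\[
2\pi^2 ij\!\int_{[0,1]^2}\!\sin(2\pi i x)\sin(2\pi j y)\,|x-y|^{2d+1}\,dx\,dy=\tfrac12\!\int_{[0,1]^2}\!\partial_x\partial_y\!\big[\cos(2\pi i x)\cos(2\pi j y)\big]\,|x-y|^{2d+1}\,dx\,dy,
\]
and integrate by parts in each variable: the boundary contributions reproduce the term $a(d)$, while the interior integral brings in $\partial_x\partial_y|x-y|^{2d+1}=-2d(2d+1)|x-y|^{2d-1}$ and thus returns the form to a Riesz energy, now with exponent $2d-1\in(-2,0)$. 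For $d\in(0,\tfrac12)$ one has $2d-1\in(-1,0)$, the kernel is locally integrable and $|t|^{2d-1}$ is positive definite, so the Fourier/energy argument of the previous paragraph applies directly. The genuinely delicate case is $d\in(-\tfrac12,0)$, where $2d-1\in(-2,-1)$: the kernel is no longer locally integrable, the integration by parts must be read in the Hadamard finite–part sense, and the boundary term $a(d)$ is indispensable to the regularisation. Establishing that this regularised combination is again a strictly positive energy is the crux.

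Finally, strictness is secured in all regimes by the same mechanism: for $v\neq 0$ the functions $g,h$ are nonzero with zero mean, so the strict conditional definiteness forces $v^{\top}\Sigma^{(c)}v>0$ and $v^{\top}\Sigma^{(s)}v>0$. A convenient alternative that bypasses the sign and integrability subtleties is to identify $\Sigma^{(c)}$ and $\Sigma^{(s)}$ as the covariance matrices of the real and imaginary parts of the limiting Gaussian functionals $\int_0^1 e^{2\pi\i j u}\,dZ_d(u)$ appearing in the proof of Theorem \ref{Th1}: as covariance matrices they are automatically positive semidefinite, and $v^{\top}\Sigma^{(c)}v=\Var\big(\sum_i v_i\,\xi_i^{(c)}\big)$ vanishes only if $\sum_i v_i\,\xi_i^{(c)}$ has zero variance, which by nondegeneracy of $Z_d$ and linear independence of the cosine family forces $v=0$. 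Since $\Sigma(d)$ is block diagonal with blocks $\Sigma^{(c)}$ and $\Sigma^{(s)}$, positive definiteness of the two blocks immediately gives that of $\Sigma(d)$.
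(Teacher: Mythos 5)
Your handling of the two $I(1)$ regimes is correct and coincides with the paper's own argument: in both, the quadratic form is rewritten as the energy of a mean--zero trigonometric polynomial against the logarithmic or Riesz kernel, and strict positivity is read off on the Fourier side (the paper uses the subordination formulas $-\log r=\int_0^\infty s^{-1}(e^{-rs}-e^{-s})\,ds$ and $|u|^{\beta}=c_\beta\int_0^\infty(1-\cos(2\pi us))\,s^{-1-\beta}ds$ with $\beta=2d-1\in(0,2)$, then Plancherel, arriving at a strictly positive integral of $|\widehat{\Phi}(\xi)|^2$ against $|\xi|^{-1}$, resp.\ $|\xi|^{-1-\beta}$; the zero--mean condition removes the atom at the origin and, together with compact support of $\Phi$, gives finiteness). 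Your version of these two cases needs no changes.

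The genuine gap is the subcase $-1/2<d<0$, which you yourself label ``the crux'' and leave open. It is an artefact of integrating by parts in the wrong direction: starting from the kernel $|x-y|^{2d+1}$ with exponent $2d+1\in(0,2)$ --- already in the conditionally positive definite range --- you differentiate twice and land on $|x-y|^{2d-1}$ with $2d-1\in(-2,-1)$, which is not locally integrable and forces a Hadamard finite--part reading that you do not justify. The paper's Case~3 avoids this entirely: it sets $\beta=2d+1\in(0,2)$ and reuses the Case~2 representation with the kernel $|x-y|^{2d+1}$ paired with the derivative polynomial $\Psi(x)=\sum_i v_i\,2\pi i\sin(2\pi i x)=-g'(x)$ (also of zero mean), the boundary term $a(d)$ being absorbed into the same computation, so no singular kernel ever appears. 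Your fallback does not repair this: realizing $\Sigma^{(c)}$ as the covariance matrix of $(\xi_1^{(c)},\dots,\xi_s^{(c)})$ gives positive \emph{semi}definiteness for free, but the claim that $\mathrm{Var}\bigl(\sum_i v_i\xi_i^{(c)}\bigr)=0$ forces $v=0$ ``by nondegeneracy of $Z_d$ and linear independence of the cosines'' is exactly the statement to be proved; linear independence in $L^2[0,1]$ does not preclude a nontrivial combination being annihilated by the spectral measure of $Z_d$, and ruling that out requires the explicit spectral representation of the energy --- i.e.\ the computation left unfinished. Replace the double integration by parts with the direct $\beta=2d+1$ representation and the case closes.
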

\begin{lem} \label{unit root} Let $X_t$ be an I(1) with $d=1/2$. Then the differenced process $Y$ satisfies the following:  As $n\to\infty$,
   $$
   \textrm{Var}(S_n(Y))\sim \delta(-1/2)\log n.
   $$ 
\end{lem}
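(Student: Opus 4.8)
The plan is to reduce the statement to a classical second-order analysis of the variance of a partial sum of an \emph{anti-persistent} linear process. Since $X_t$ is $I(1)$ with $d=1/2$, the differenced series $Y_t=X_t-X_{t-1}$ is $I(0)$ with memory parameter $d-1=-1/2$, so by \eqref{negatived1} its moving-average coefficients satisfy $\sum_{j\ge 0}a_j=0$ together with $a_j=c(-1/2)\,j^{-3/2}(1+O(j^{-1}))$. The first relation is the crucial structural fact: it forces the spectral density of $Y$ to vanish at the origin, which is exactly what makes the usual $O(n)$ term in $\Var(S_n(Y))$ disappear and promotes the logarithmic term to leading order.

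First I would record the covariance asymptotics. Applying \eqref{gammad} at $u=-1/2$ gives $\gamma_Y(h)\sim C(-1/2)\,h^{-2}$ with $C(-1/2)=\sigma_\epsilon^2\,c(-1/2)\,B(-1/2,2)$; evaluating the Beta function, $B(-1/2,2)=\Gamma(-1/2)\Gamma(2)/\Gamma(3/2)=-4$, so $C(-1/2)=-4\sigma_\epsilon^2 c(-1/2)$. In particular $\gamma_Y$ is summable. Because $f_Y(0)=\frac{\sigma_\epsilon^2}{2\pi}\bigl(\sum_j a_j\bigr)^2=0$, summability yields the identity $\sum_{h\in\mathbb{Z}}\gamma_Y(h)=2\pi f_Y(0)=0$.

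Next I would use the standard exact expansion
\begin{equation*}
\Var(S_n(Y))=\sum_{|h|<n}(n-|h|)\,\gamma_Y(h)=n\sum_{|h|<n}\gamma_Y(h)-\sum_{|h|<n}|h|\,\gamma_Y(h).
\end{equation*}
For the first term, the vanishing of the full sum lets me replace it by a tail: $n\sum_{|h|<n}\gamma_Y(h)=-n\sum_{|h|\ge n}\gamma_Y(h)$, and since $\sum_{|h|\ge n}\gamma_Y(h)\sim 2C(-1/2)/n$ this contribution is $O(1)$, hence negligible against $\log n$. For the second term, $|h|\,\gamma_Y(h)\sim C(-1/2)/|h|$, so $\sum_{|h|<n}|h|\,\gamma_Y(h)=2\sum_{h=1}^{n-1}h\,\gamma_Y(h)\sim 2C(-1/2)\log n$. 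Combining, $\Var(S_n(Y))\sim -2C(-1/2)\log n=8\sigma_\epsilon^2 c(-1/2)\log n=\delta(-1/2)\log n$, which is the claim.

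The main obstacle is the bookkeeping forced by the cancellation of the leading term: one cannot simply invoke $\Var(S_n(Y))\sim n\sum_h\gamma_Y(h)$, and must instead extract the next-order $\log n$ behaviour while proving that every competing contribution is genuinely $O(1)$. Concretely, the delicate points are (i) justifying $\sum_{h\in\mathbb{Z}}\gamma_Y(h)=2\pi f_Y(0)=0$ from $\sum_j a_j=0$ and summability, and (ii) upgrading the asymptotic equivalences $\gamma_Y(h)\sim C(-1/2)\,h^{-2}$ into rigorous control of the weighted sums $\sum_{|h|\ge n}\gamma_Y(h)$ and $\sum_{h=1}^{n-1}h\,\gamma_Y(h)$, which I would handle by splitting off a fixed finite range and applying a standard Abel/partial-summation (Kronecker-type) argument, so that the $O(j^{-1})$ relative error in $a_j$—and hence in $\gamma_Y(h)$—contributes only to the $o(\log n)$ remainder.
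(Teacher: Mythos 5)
Your proof is correct and follows essentially the same route as the paper: both use $\sum_{h\in\mathbb{Z}}\gamma_Y(h)=0$ together with the exact decomposition $\Var(S_n(Y))=n\sum_{|h|<n}\gamma_Y(h)-\sum_{|h|<n}|h|\gamma_Y(h)$, show the first term is $O(1)$ via the tail $\sum_{|h|\ge n}\gamma_Y(h)=O(n^{-1})$, and extract the $\log n$ from the second term, with the same Karamata/partial-summation care for the $o(\log n)$ remainder. Your sign bookkeeping (taking $C(-1/2)=\sigma_\epsilon^2 c(-1/2)B(-1/2,2)=-4\sigma_\epsilon^2 c(-1/2)<0$, so that $-2C(-1/2)\log n=\delta(-1/2)\log n>0$) is in fact more carefully tracked than in the paper's own write-up, and your derivation of $\sum_h\gamma_Y(h)=0$ from $f_Y(0)=0$ is an acceptable substitute for the paper's appeal to an adapted Proposition 3.2.1.
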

\begin{lem}\label{finite dim} 
(1)  if $1/2\le d<3/2$,
\begin{eqnarray*}\lefteqn{
\Bigg[\left(\frac{1}{n^{1/2+d}}\sum_{k=1}^n\cos(k\lambda_{1})S_k(Y)\right),
\cdots,\left(\frac{1}{n^{1/2+d}}\sum_{k=1}^n\cos(k\lambda_{s})S_k(Y)\right),}\\
&& \left(\frac{1}{n^{1/2+d}}\sum_{k=1}^n\sin(k\lambda_1)S_k(Y)\right),\cdots,\left(\frac{1}{n^{1/2+d}}\sum_{k=1}^n\sin(k\lambda_{s})S_k(Y)\right)\Bigg]\overset{d}{\rightarrow}\mathcal{N}\left((0,\Sigma(d)\right), \nonumber
\end{eqnarray*}
(2) if $-1/2<d<1/2$
\begin{eqnarray*}\lefteqn{
\Bigg[\left(\frac{1}{n^{1/2+d}}\sum_{k=1}^n\cos(k\lambda_{1})X_k\right),
\cdots,\left(\frac{1}{n^{1/2+d}}\sum_{k=1}^n\cos(k\lambda_{s})X_k\right),}\nonumber\\
&& \left(\frac{1}{n^{1/2+d}}\sum_{k=1}^n\sin(k\lambda_1)X_k\right),\cdots,\left(\frac{1}{n^{1/2+d}}\sum_{k=1}^n\sin(k\lambda_{s})X_k\right)\Bigg]\overset{d}{\rightarrow}\mathcal{N}\left((0,\Sigma(d)\right). 
\end{eqnarray*}
\end{lem}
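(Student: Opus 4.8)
The plan is to prove both parts as a single $2s$-dimensional central limit theorem for a vector of linear functionals of the i.i.d.\ innovations $(\epsilon_t)$. In case (2) each coordinate $n^{-(1/2+d)}\sum_{k=1}^n\cos(k\lambda_j)X_k$ becomes, after inserting $X_k=\sum_r a_r\epsilon_{k-r}$ and interchanging the order of summation, a sum $\sum_t b_{n,t}\epsilon_t$ with $b_{n,t}=n^{-(1/2+d)}\sum_k\cos(k\lambda_j)a_{k-t}$; in case (1) the same reduction applies with $X_k$ replaced by $S_k(Y)$, which is again a linear functional of the innovations because $Y$ is an $I(0)$ linear process. By the Cram\'er--Wold device it suffices to show, for every fixed $\theta\in\mathbb{R}^{2s}$, that the scalar $L_n=\sum_t c_{n,t}\epsilon_t$ obtained as $\theta^\top$ times the vector converges to $\mathcal{N}(0,\theta^\top\Sigma(d)\theta)$. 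The argument then splits into two tasks: (i) computing $\lim_n\Var(L_n)=\theta^\top\Sigma(d)\theta$, that is, identifying every entry of $\Sigma(d)$; and (ii) verifying a Lindeberg/Lyapunov condition for the triangular array $\{c_{n,t}\epsilon_t\}$.

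For task (i) I would write each covariance entry as a double sum and pass to a Riemann integral. In case (2) the cosine--$i$/cosine--$j$ entry is $n^{-(1+2d)}\sum_{k,l}\cos(k\lambda_i)\cos(l\lambda_j)\gamma(k-l)$; substituting $\gamma(h)\sim C(d)|h|^{2d-1}$ from (\ref{gammad}) and writing $x=k/n$, $y=l/n$ (so $k\lambda_i=2\pi ix$) gives $C(d)\int_{[0,1]^2}\cos(2\pi ix)\cos(2\pi jy)|x-y|^{2d-1}\,dx\,dy$. In case (1) the entry is $n^{-(1+2d)}\sum_{k,l}\cos(k\lambda_i)\cos(l\lambda_j)\Cov(S_k(Y),S_l(Y))$, and using $\Cov(S_k,S_l)=\tfrac12(\Var S_k+\Var S_l-\Var S_{|k-l|})$ together with the growth of $\Var S_n(Y)$ produces a fractional--Brownian kernel $\tfrac12(x^{2H}+y^{2H}-|x-y|^{2H})$ with $2H=2d-1$ for $1/2<d<3/2$, and the logarithmic kernel supplied by Lemma \ref{unit root} at $d=\tfrac12$. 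The decisive simplification in every case is the orthogonality $\int_0^1\cos(2\pi jx)\,dx=\int_0^1\sin(2\pi jx)\,dx=0$ for $j\ge1$ (the continuous form of (\ref{sumzero2})): it annihilates the $x^{2H}$, $y^{2H}$ terms, and at the boundary the divergent $\log n$ contribution, leaving only the off-diagonal kernels (a multiple of $|x-y|^{2d-1}$, and $-\log|x-y|$ at the boundary) that appear in $\Sigma^{(c)}$ and $\Sigma^{(s)}$, the precise constants $\delta(d-1)$, $\delta(-1/2)$ and their signs being fixed by matching the variance asymptotics. The same reflection $(x,y)\mapsto(1-x,1-y)$, under which $\cos(2\pi ix)$ is invariant, $\sin(2\pi jy)$ changes sign and $|x-y|$ is preserved, forces every cosine--sine cross integral to equal its negative and hence to vanish; this gives the block-diagonal form (\ref{Sigma}).

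A separate difficulty arises for $-\tfrac12<d<0$: then $2d-1<-1$, the kernel $|x-y|^{2d-1}$ is non-integrable across the diagonal, and the limit above is only formal. Here I would integrate by parts twice before taking limits, transferring the two derivatives from the cosines onto the kernel: each differentiation of $\cos(2\pi ix)$ produces a factor $2\pi i$ and converts it to $\sin$, while $|x-y|^{2d-1}$ integrates to a multiple of $|x-y|^{2d+1}$, now integrable since $2d+1\in(0,2)$. Using $C(d)=\delta(d)\,d(2d+1)$, this reproduces exactly the representation $-\delta(d)[a(d)+2\pi^2 ij\int_{[0,1]^2}\sin(2\pi ix)\sin(2\pi jy)|x-y|^{2d+1}\,dx\,dy]$, with $a(d)$ collecting the boundary contributions at $x,y\in\{0,1\}$; the companion $\Sigma^{(s)}$ entry follows identically with the roles of $\cos$ and $\sin$ interchanged. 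Carrying out this representation uniformly over the whole range $-1/2<d<1/2$ is what makes the stated formula valid at once in the short-, long-, and anticipative-memory cases.

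For task (ii) I would invoke Lyapunov's condition: $\sum_t\mathbb{E}|c_{n,t}\epsilon_t|^4=\eta\sum_t c_{n,t}^4\le\eta\,(\max_t c_{n,t}^2)\sum_t c_{n,t}^2$, and since $\sum_t c_{n,t}^2=\Var(L_n)/\sigma_\epsilon^2$ has a finite positive limit by task (i), it suffices that $\max_t c_{n,t}^2\to0$. This negligibility follows from bounding the coefficients $c_{n,t}=n^{-(1/2+d)}\sum_k(\cdots)a_{k-t}$ using the regular variation (\ref{positived})--(\ref{negatived1}) of $(a_j)$, the normalization $n^{1/2+d}$ ensuring each is $o(1)$ uniformly in $t$. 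I expect the covariance analysis of task (i), not the CLT, to be the main obstacle: one must justify the Riemann-sum approximation uniformly despite the diagonal singularity of $|x-y|^{2d-1}$, handle the boundary $d=\tfrac12$ where $\Var S_n$ grows only logarithmically, and rigorously justify the double integration by parts in the antipersistent regime, in each case relying on the trigonometric orthogonality to discard the non-convergent diagonal part.
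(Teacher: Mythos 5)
Your proposal is correct and follows essentially the same route as the paper: reduce each coordinate to a linear form $\sum_t c_{n,t}\epsilon_t$ in the innovations, obtain the covariance limits as Riemann sums in which the orthogonality $\sum_k e^{2\pi\i kj/n}=0$ annihilates the divergent diagonal and $\log n$ contributions, handle the range $-1/2<d<1/2$ by a double summation/integration by parts that trades the non-integrable kernel $|x-y|^{2d-1}$ for $|x-y|^{2d+1}$ plus the boundary term $a(d)$ (the paper performs this as Abel summation on the random sums, rewriting $\sum_k\cos(k\lambda_j)X_k$ in terms of the integrated process $S_k$ with parameter $d+1$, which is the same computation in a different order), and conclude by a CLT for triangular arrays with uniformly negligible coefficients. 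The only cosmetic difference is that the paper invokes Theorem 4.3.2 of Giraitis--Koul--Surgailis for the final normality step rather than verifying the Lindeberg/Lyapunov condition directly, but the hypotheses checked ($\sup_u|c_{n,u}|\to0$ and convergence of $\sum_u c_{n,u}^2$) are exactly those in your plan.
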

\begin{lem} \label{denom} For any $j=1,\ldots,j$, $-1/2<d<3/2$, and as $n\to\infty$,
\begin{equation}\label{mena limit}
\mathbb{E}\left[\left(\frac{1}{m}\sum_{h=1}^m\frac{I_{n,h}(\lambda'_j)}{\ell^{2d}}-D_{jj}\right)^2\right]\to 0.
\end{equation}
\end{lem}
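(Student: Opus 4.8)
The plan is to set $Z_h:=I_{n,h}(\lambda'_j)/\ell^{2d}$ and $\bar Z:=\tfrac1m\sum_{h=1}^m Z_h$, and to use the decomposition $\mathbb{E}[(\bar Z-D_{jj})^2]=\Var(\bar Z)+(\mathbb{E}[\bar Z]-D_{jj})^2$, reducing (\ref{mena limit}) to the two statements $\mathbb{E}[\bar Z]\to D_{jj}$ and $\Var(\bar Z)\to0$. The first move is a structural reduction valid in both regimes. Writing $t=(h-1)\ell+r$ with $r=1,\dots,\ell$ and $\lambda'_j=2\pi j/\ell$, we have $\cos(t\lambda'_j)=\cos(r\lambda'_j)$ and $\sin(t\lambda'_j)=\sin(r\lambda'_j)$, so the trigonometric weights are identical on every block; moreover, by (\ref{sumzero2}) the block periodogram is invariant under subtracting a constant from the block, which in the $I(1)$ case permits replacing $X_t$ by the within-block partial sum $\sum_{a=1}^r Y_{(h-1)\ell+a}$. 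Setting
\[
C_h=\frac{1}{\ell^{1/2+d}}\sum_{r=1}^\ell\cos(r\lambda'_j)\,W^{(h)}_r,\qquad S_h=\frac{1}{\ell^{1/2+d}}\sum_{r=1}^\ell\sin(r\lambda'_j)\,W^{(h)}_r,
\]
with $W^{(h)}_r=X_{(h-1)\ell+r}$ ($I(0)$) or $W^{(h)}_r=\sum_{a=1}^r Y_{(h-1)\ell+a}$ ($I(1)$), we obtain $Z_h=\tfrac{1}{2\pi}(C_h^2+S_h^2)$, and by stationarity of $X$, respectively $Y$, the pairs $(C_h,S_h)$ are identically distributed with between-block dependence controlled by the covariance decay of the underlying $I(0)$ sequence.

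For the bias, stationarity gives $\mathbb{E}[\bar Z]=\mathbb{E}[Z_1]$, and Lemma~\ref{finite dim} applied at scale $\ell$ to the single block $h=1$ yields $(C_1,S_1)\overset{d}{\to}\mathcal N(0,\Sigma(d))$; identifying the diagonal limits gives $\mathbb{E}[Z_1]\to\Sigma^{(c)}_{jj}+\Sigma^{(s)}_{jj}=D_{jj}$, provided convergence in law is upgraded to convergence of second moments. For that I would prove uniform integrability of $\{C_1^2,S_1^2\}$ via the bound $\sup_\ell\mathbb{E}[C_1^4]<\infty$, which follows from the linear representation $C_1=\sum_u\beta_u\epsilon_u$, the identity $\mathbb{E}[C_1^4]=3\sigma_\epsilon^4(\sum_u\beta_u^2)^2+(\eta-3\sigma_\epsilon^4)\sum_u\beta_u^4$, and $\sum_u\beta_u^2=\Var(C_1)/\sigma_\epsilon^2=O(1)$. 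The explicit shapes of $\Sigma^{(c)}_{jj},\Sigma^{(s)}_{jj}$ are exactly the second-moment computation behind Lemma~\ref{finite dim}: one passes to the fractional-Brownian covariance kernel of the partial sums (of $Y$ in the $I(1)$ case, of $X$ in the $I(0)$ case), the boundary contributions drop out because $\int_0^1\cos(2\pi jx)\,dx=\int_0^1\sin(2\pi jx)\,dx=0$, and in the $I(0)$ range a summation by parts transfers the computation onto the integrable kernel $|x-y|^{2d+1}$, producing the correction $a(d)$; at $d=1/2$ the kernel is logarithmic and the divergent $\log\ell$ term is annihilated by the same vanishing-sum identity (cf.\ Lemma~\ref{unit root}).

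The variance term is the crux. Expanding $\Var(\bar Z)=\tfrac{1}{m^2}\sum_{h,h'}\Cov(Z_h,Z_{h'})$, the diagonal contributes $\tfrac1m\Var(Z_1)=O(1/m)$ by the fourth-moment bound above. For $h\ne h'$ I would use $\Cov(C_h^2,C_{h'}^2)=2\Cov(C_h,C_{h'})^2+(\eta-3\sigma_\epsilon^4)\sum_u(\beta^{(h)}_u)^2(\beta^{(h')}_u)^2$ (with the analogous identities for the sine and mixed terms), so everything reduces to bounding $\Cov(C_h,C_{h'})$. This is the main obstacle: the underlying covariance is only slowly decaying (like $|k|^{2d-1}$, or $|k|^{2d-3}$ after differencing), so a crude bound does not force the off-diagonal sum to vanish. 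The remedy is to Taylor-expand the regularly varying kernel across the gap of size $|h-h'|\ell$ and invoke $\sum_{r=1}^\ell\cos(r\lambda'_j)=\sum_{r=1}^\ell\sin(r\lambda'_j)=0$, which annihilates the zeroth- and first-order terms of the expansion; since $\sum_{r=1}^\ell r\cos(r\lambda'_j)=\ell/2$, the surviving second-order term gives
\[
\Cov(C_h,C_{h'})=O\!\left(\ell^{-2}\,|h-h'|^{2d-3}\right),\qquad |h-h'|\ge2,
\]
the integrated case, reduced to within-block increments, yielding the same order directly. Squaring and summing then gives $\tfrac{1}{m^2}\sum_{h\ne h'}\Cov(C_h,C_{h'})^2=O\!\big(\tfrac{1}{m}\,\ell^{-4}\sum_{p\ge1}p^{4d-6}\big)\to0$ over the whole range $d\in(-1/2,3/2)$, while the fourth-cumulant terms vanish because $\eta<\infty$ and the coefficients $(a_j)$ decay. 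Handling the adjacent pairs $|h-h'|=1$ crudely by Cauchy--Schwarz (another $O(1/m)$ contribution), we conclude $\Var(\bar Z)\to0$, which with the bias estimate yields (\ref{mena limit}).
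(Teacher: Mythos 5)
Your skeleton (bias--variance split, identical distribution of the blocks, fourth--moment bounds, reduction to off--diagonal covariances of the squared cosine/sine transforms) matches the paper's, but your mechanism for the key off--diagonal step is genuinely different. The paper never bounds $\Cov(C_h,C_{h'})$ at all: it splits the linear form for block $h$ into the innovations inside block $h$ and those preceding it, proves $\sum_{u\le(h-1)\ell}d_{\ell,h,u}^2\to0$ \emph{uniformly in} $h$ (by direct computation on the $a_j$), and concludes that the blocks are asymptotically independent, so each off--diagonal covariance vanishes uniformly and the Ces\`aro average $\tfrac1m\sum_h(1-\tfrac{h-1}{m})\Cov(\cdot,\cdot)$ goes to $0$ with no rate needed; the stationary range $-1/2<d<1/2$ is then folded back into the integrated case via the summation--by--parts identity \eqref{expansion}. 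You instead use the exact identity $\Cov(C_h^2,C_{h'}^2)=2\Cov(C_h,C_{h'})^2+(\eta-3\sigma_\epsilon^4)\sum_u(\beta_u^{(h)})^2(\beta_u^{(h')})^2$ and a decay rate for $\Cov(C_h,C_{h'})$. That route works, but with two remarks. First, the second--order Taylor expansion of the covariance kernel across the gap is not justified under the paper's assumptions: \eqref{positived}--\eqref{negatived1} control only the first--order asymptotics of $a_j$, hence of $\gamma$, and give no handle on second differences of $\gamma$, which is what the $O(\ell^{-2}|h-h'|^{2d-3})$ bound requires. Second, and fortunately, that refinement is unnecessary: because the relevant sum involves $\Cov(C_h,C_{h'})^2$ rather than $\Cov(C_h,C_{h'})$, the crude bounds $|\Cov(C_h,C_{h'})|\le C|h-h'|^{2d-1}$ (stationary case) and $\le C|h-h'|^{2d-3}$ (integrated case, after reducing to within--block increments) already yield
\[
\frac{1}{m^2}\sum_{h\ne h'}\Cov(C_h,C_{h'})^2 = O\!\left(m^{-1}+m^{4d-2}\right)\to0
\]
on the whole range $d\in(-1/2,1/2)$ (and $O(m^{-1}+m^{4d-6})$ for $d\in[1/2,3/2)$); your stated worry that a crude bound fails is correct for $\sum_{h\ne h'}\Cov(C_h,C_{h'})$ but not for the sum of its squares, which is what actually appears. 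A smaller point: the bias step needs no uniform integrability argument, since $\mathbb{E}[Z_1]=\tfrac{1}{2\pi}\bigl(\Var(C_1)+\Var(S_1)\bigr)$ exactly, so the variance computations underlying Lemma~\ref{finite dim} give $\mathbb{E}[Z_1]\to D_{jj}$ directly. With the Taylor--expansion step replaced by the crude bound, your argument is a valid alternative to the paper's; its advantage is an explicit rate for $\Var(\bar Z)$, while the paper's block--independence argument is the one that goes through under the stated minimal assumptions on $(a_j)$.
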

%==========================================================
\section{Testing Procedure and Monte Carlo Study }\label{simulation}
%==========================================================

\subsection{Definition of the critical region }

Testing stationarity is formulated as the one-sided hypothesis  test
\[
H_0: I(1) \quad \text{versus} \quad H_1: I(0),
\]
where the test statistic is $Q_{n,m}(s,1/2)$ and the limiting distribution is $Q(s,1/2)$.  As shown, this procedure is completely parameter free, meaning that it does not require estimation of the memory parameter $d$. We reject $H_0$ in favour of stationarity whenever  
\[
Q_{n,m}(s,1/2) < q_{\alpha},
\]
where $q_{\alpha}$ denotes the $5\%$ quantile of the limiting distribution $Q(s,1/2)$. Otherwise, we fail to reject $H_0$.\\

This test has asymptotical level $\alpha$ : 
$$ \sup_{I(1)} P(Q_{n,m}(s,1/2) < q_{\alpha})  =  \alpha.$$ 
Indeed the supremum over $I(1)$ is attained  at $d=1/2$ according to Theorem \ref{Th1}. 
Moreover under $I(0)$ the power function tends to 1 as $n$ goes to infinity. 
\iffalse

\begin{center}
\begin{tabular}{lll}
\hline
Quantity & Description & Value\\
\hline
$n$ & sample size & 2000\\
$\ell$ & block length & 10\\
$m$ & number of blocks ($n/\ell$) & 200\\
$s$ & Fourier frequencies & $1,\ldots,5$\\
$R$ & replications & 3000\\
DGP & FARIMA$(0,d,0)$, ARFIMA$(1,d,0)$ & $d\in\{0.3,0.4,0.45,0.49,0.5,0.6,0.8,1\}$\\
\hline
\end{tabular}
\end{center}
\fi

\subsection{Superimposed Cumulative Distribution Functions (CDF)}
We illustrate the convergence result \eqref{d limit} obtained in Theorem \ref{Th1} for different values of $d$. In particular the value $d=1/2$ showing that the empirical size is very close to the nominal level. 
Figure~\ref{fig:cdf} compares finite-sample and limiting CDFs
for $Q_{n,m}(s=2,d)$ under different $d$ values.

\iffalse
TO do comparaison of the empirical size for different s, $\ell$ $m$. 

\subsection{Quantile Comparison ($s=2$)}
\begin{center}
\begin{tabular}{c|ccc}
\hline
$p$ & Limit & MC ($d=0.45$) & MC ($d=1.0$)\\
\hline
0.05 & 0.09 & 0.10 & 0.11\\
0.10 & 0.17 & 0.18 & 0.20\\
0.50 & 0.98 & 1.03 & 1.07\\
0.90 & 1.82 & 1.92 & 2.10\\
0.95 & 2.02 & 2.10 & 2.25\\
\hline
\end{tabular}
\end{center}
\fi 
\subsection{Empirical Size and Power} 
Figure \ref{fig:power_panel} represents the empirical size and power analysis of the statistic $Q_{n,m}(s)$ under two different DGPs: FARIMA(0, $d$, 0) and AR(1). It displays the size and power as functions of the memory parameter $d$ and the autoregressive coefficient $\phi$ under the respective stationarity assumptions. In particular, it shows that while the empirical size remains close to the nominal level (here $0.05$), the power increases sharply toward one as the process moves below the nonstationarity boundary--$d = 1/2$ for the FARIMA(0, $d$, 0) model and $\phi = 1$ for the AR(1) model.  We scanned several values of $s$, number of Fourier frequencies to be included in the test statistic and found that $s=2$ is a balanced trade-off choice between empirical sizes and powers.\\
\textbf{Comparison with the V/S test.} The fact that $d$ be in a compact of (-1/2,1/2) and estimated in V/S may explain its high emprical size.

\begin{table}[h!]
\centering
\caption{Empirical size of the V/S test at $\alpha=0.05$ (sample sizes $n=1000$ and 4000).}
\begin{tabular}{lcc}
\hline
\textbf{DGP} & \textbf{Parameters} & \textbf{V/S empirical size (range)} \\
\hline
FARIMA$(0,d,0)$ & $d=0.45$ & $0.15$--$0.20$ \\
AR$(1)$ & $\phi=0.8$ & $0.12$--$0.18$ \\
\hline
\end{tabular}

\medskip
\raggedright\footnotesize
\end{table}
 Our procedure is a one-sided test of nonstationarity vs.\ stationarity. 
By contrast, the V/S test stationarity versus nonstationarity, i.e. asks whether there exists some $d<\tfrac{1}{2}$ consistent with the DGP and requires plug-in estimation of $d$.
\newpage 
\begin{figure}[h!]
  \centering
  \includegraphics[width=0.31\textwidth,height = .3\textheight]{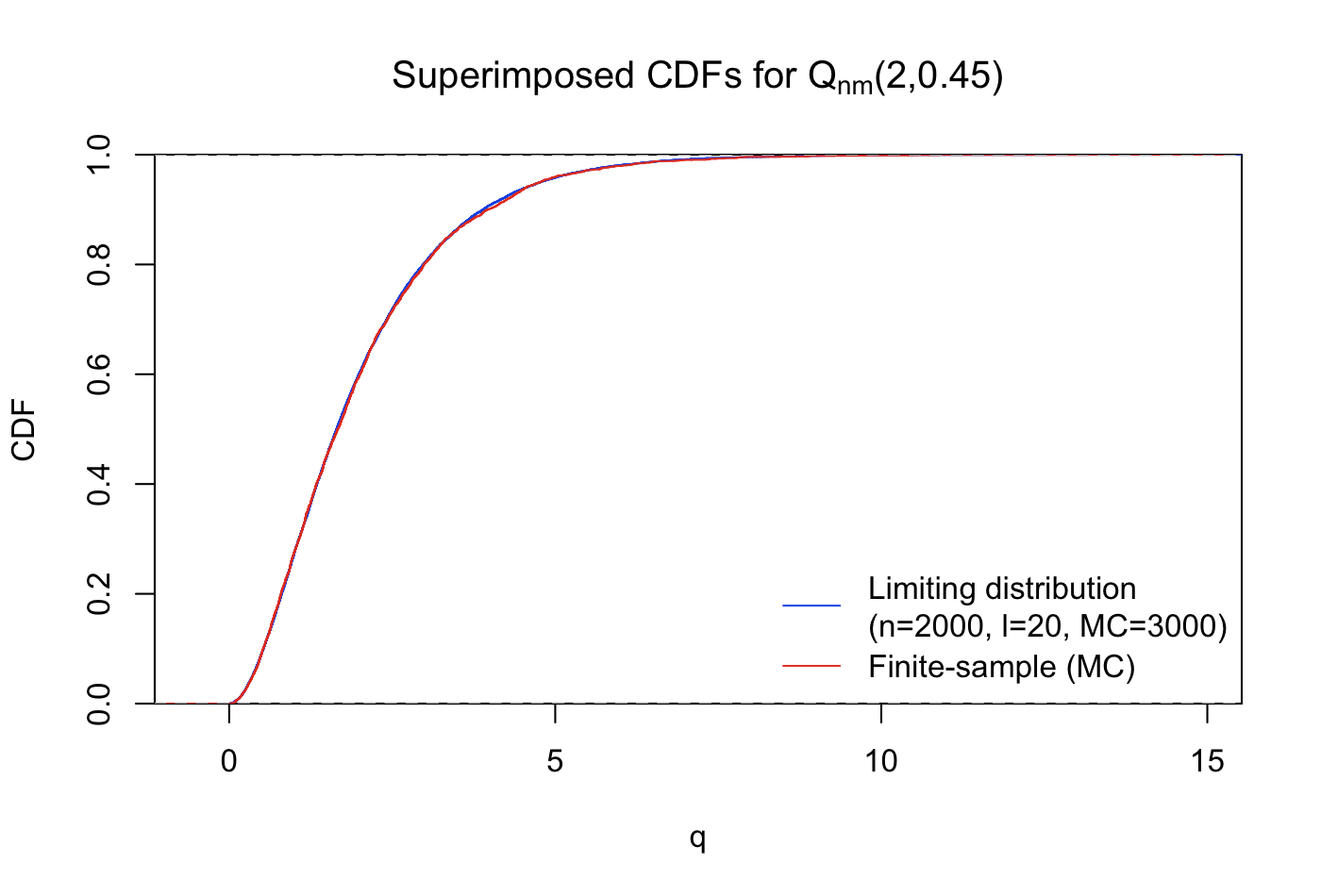}
  \includegraphics[width=0.31\textwidth,height = .3\textheight]{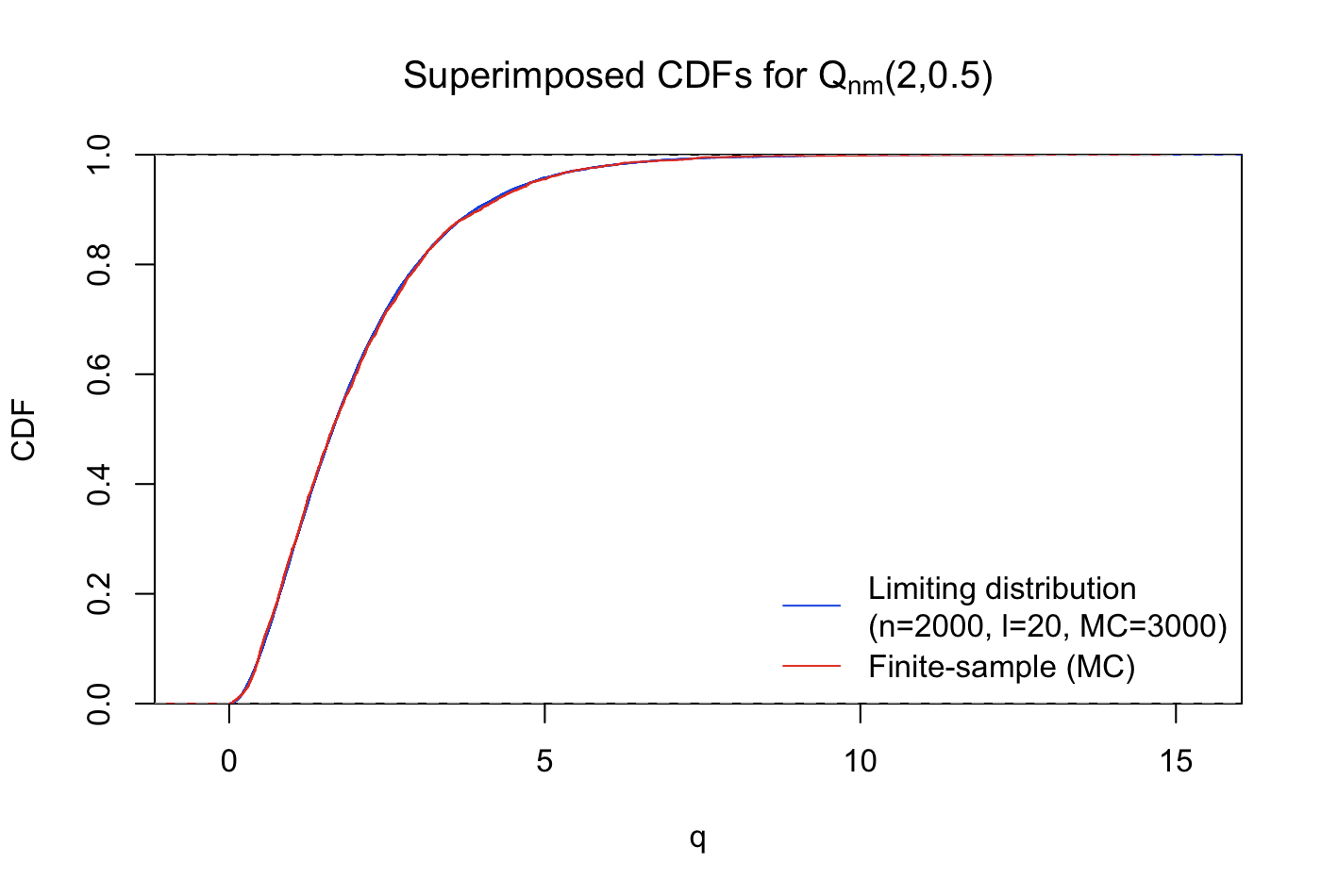}
  \includegraphics[width=0.31\textwidth,height = .3\textheight]{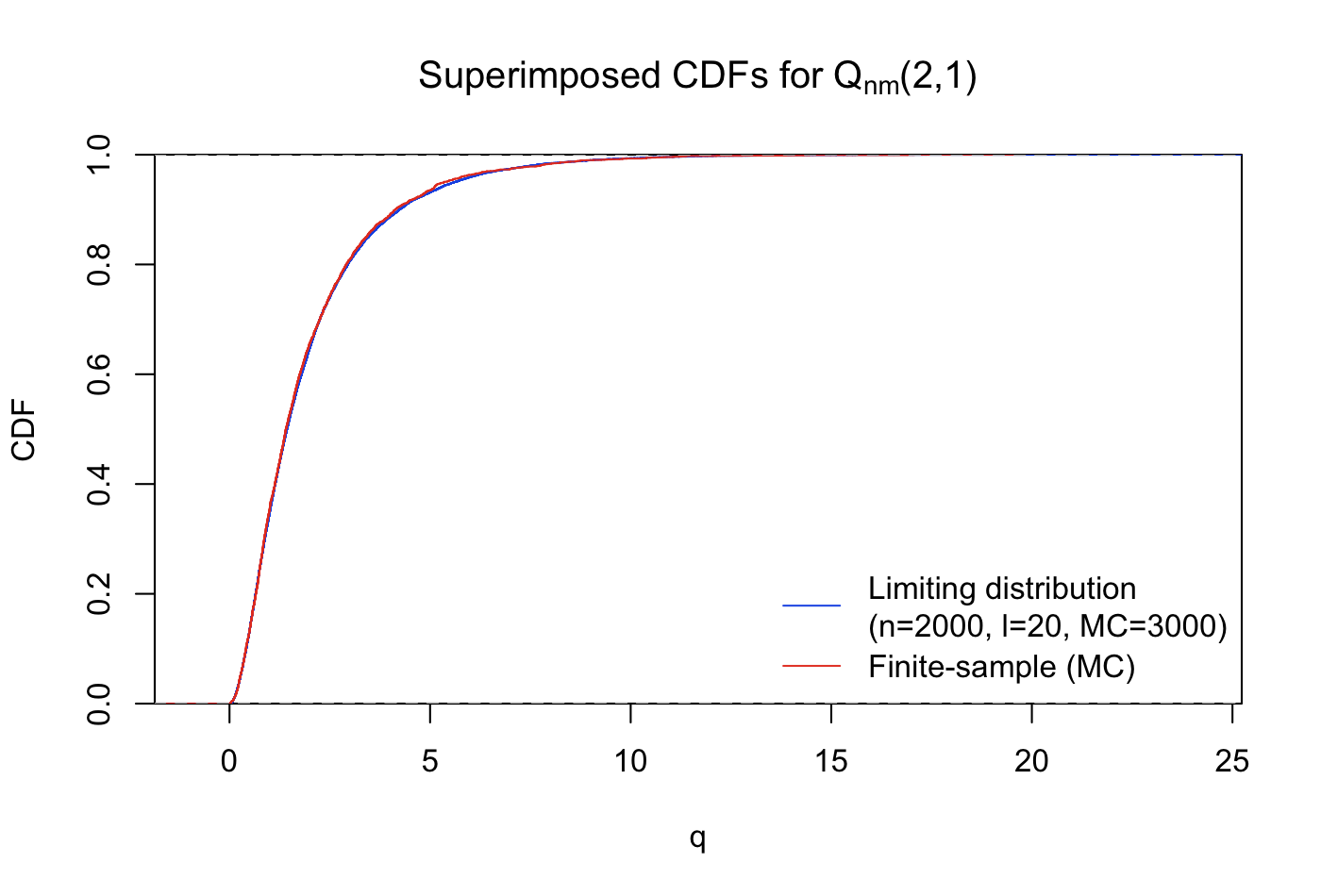}
  \caption{Superimposed CDFs of $Q_{n,m}(2,d)$:
  finite-sample (red) vs.\ limiting distribution (blue)
  for $d=0.45$, $0.5$, and $1.0$ ($n=2000$, $\ell=10$, $m=200$).}
  \label{fig:cdf}
\end{figure}

\begin{figure}[h!]
    \centering
    % adjust width as needed (0.9\textwidth recommended)
    \includegraphics[width=0.9\textwidth]{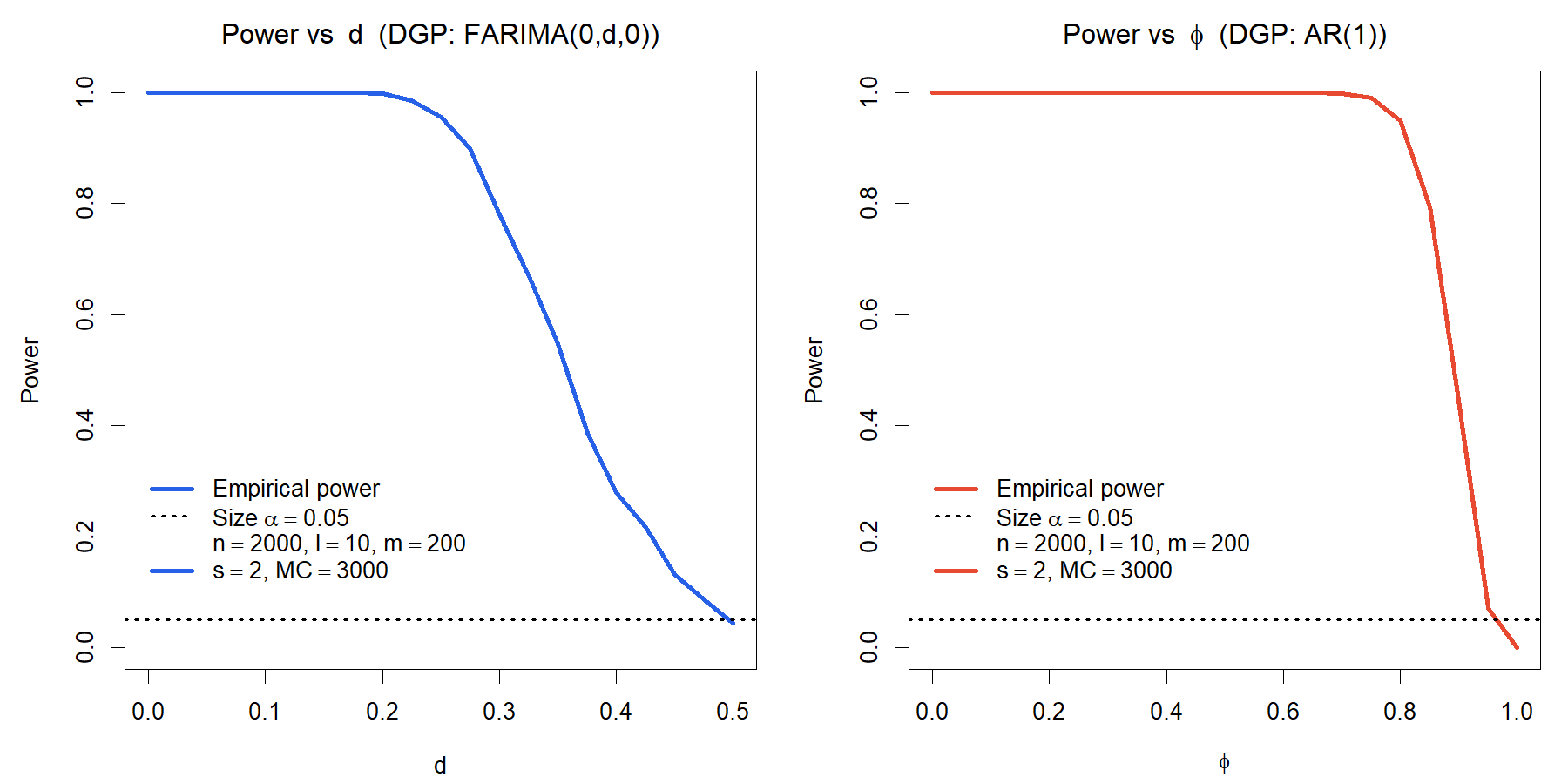}
    \caption{Empirical power functions of the statistic $Q_{n,m}(s)$ for two DGPs:
    (Left) FARIMA$(0,d,0)$ with $d \leq \tfrac{1}{2}$ and
    (Right) AR$(1)$ with $\phi \in [0,1]$.
    Simulations used $n=2000$, $\ell=10$, $m=200$, $s=2$, and $3000$ Monte Carlo replications.}
    \label{fig:power_panel}
\end{figure}
\clearpage
\section{Proofs of Lemmas}\label{proofs}
{\bf Proof of Lemma \ref{invertible}}
\begin{proof}
We consider three cases according to the value of $d$.

\medskip
\noindent
\textbf{Case 1:} $\mathbf{d=\tfrac{1}{2}}$.

Let $a=(a_1,\ldots,a_s)\in\mathbb{R}^s\setminus\{0\}$ and define
\[
\Phi(x) = \sum_{j=1}^{s} a_j \cos(2\pi jx), \qquad x\in(0,1),
\]
extended by $0$ outside $(0,1)$. Clearly, $\displaystyle \int_{\mathbb{R}}\Phi(x)\,dx=0$.  
Using the integral representation
\[
-\log r = \int_0^\infty \frac{e^{-rs}-e^{-s}}{s}\,ds,
\]
we have
\[
a'\Sigma^{(c)}a = \iint_{\mathbb{R}^2} \Phi(x)\Phi(y)
\!\int_0^\infty \frac{e^{-s|x-y|}-e^{-s}}{s}\,ds\,dx\,dy.
\]
Interchanging the order of integration gives
\[
a'\Sigma^{(c)}a = \int_0^\infty \frac{1}{s}
\Bigl\langle \Phi,\, \Phi*K_s \Bigr\rangle\,ds,
\]
where $K_s(t)=e^{-s|t|}$ and $\widehat{K_s}(\xi)=\dfrac{2s}{s^2+\xi^2}$.  
By the Plancherel identity and the convolution theorem,
\[
\langle \Phi,\,\Phi*K_s\rangle
= \frac{1}{2\pi}\int_{\mathbb{R}} |\widehat{\Phi}(\xi)|^2 \,\widehat{K_s}(\xi)\,d\xi.
\]
Hence,
\[
a'\Sigma^{(c)}a
= \frac{1}{\pi}\int_{\mathbb{R}} |\widehat{\Phi}(\xi)|^2
\Bigl(\int_0^\infty \frac{ds}{s^2+\xi^2}\Bigr)\,d\xi
= \frac{1}{2}\int_{\mathbb{R}}\frac{|\widehat{\Phi}(\xi)|^2}{|\xi|}\,d\xi.
\]
The integral is finite because $\Phi$ has compact support and
\[
\widehat{\Phi}(0)=\int_0^1\Phi(x)\,dx=0,
\]
so that $\widehat{\Phi}$ is continuously differentiable with
$|\widehat{\Phi}(\xi)|\le C|\xi|$ near $0$ and $|\widehat{\Phi}(\xi)|=O(|\xi|^{-2})$ as $|\xi|\to\infty$.  
Since $\Phi\not\equiv0$, the integral is strictly positive, proving positive definiteness.  
The same argument applies to $\Sigma^{(s)}$.

\medskip
\noindent
\textbf{Case 2:} $\mathbf{\tfrac{1}{2}\le d<\tfrac{3}{2}}$.

Let $\beta=2d-1\in(0,2)$.  Using  (\cite{gradshteyn2007tables} 3.823) and the fact that for any non integer value $z$,
\[
\Gamma(-z) = -\,\frac{\pi}{z\,\Gamma(z)\,\sin(\pi z)}.
\]
(the case $\beta=1$ is obtained directly as $\Gamma$  function is not defined for nonpositive integers), we obtain
\[
|u|^{\beta}
= c_\beta \int_0^\infty (1-\cos(2\pi us))\,s^{-1-\beta}\,ds,
\quad
c_\beta=\frac{2^{1-\beta}\Gamma(1+\beta)\sin(\pi\beta/2)}{\pi^{1+\beta}}>0,
\]
and the fact that $\int_0^1\Phi(x)\,dx=0$, we get
\[
a'\Sigma^{(c)}a
= c_\beta \int_0^\infty s^{-1-\beta}\,|\widehat{\Phi}(s)|^2\,ds > 0.
\]

\medskip
\noindent
\textbf{Case 3:} $\mathbf{-\tfrac{1}{2}<d<\tfrac{1}{2}}$.

Here $\beta=2d+1\in(0,2)$, and we can go back to case 2.
The same reasoning applies to $\Sigma^{(s)}$.

\smallskip
Thus, in all cases, $\Sigma^{(c)}$ and $\Sigma^{(s)}$ are positive definite, and therefore $\Sigma(d)$ is positive definite.
\end{proof}
{\bf Proof of Lemma \ref{unit root}}
\begin{proof}
First, we start by noticing that the proof of Proposition 3.2.1 of \cite{MR2977317} can be adapted to include the case $d=-1/2$ to obtain that the covariance function of the differenced process $Y_k$ (which is an I(0) process with $d=-1/2$) satisfies
$$
\sum_{k=-\infty}^\infty\gamma_Y(k)=0
$$
and has the form
$$
\gamma_Y(k)=k^{-2}h(k),
$$
where $h$ is a slowly varying function at infinity and $h(k)\to c:=4\sigma^2_\epsilon c(-1/2)$.
We have
\begin{eqnarray}
\textrm{Var}(S_n(Y))&=&n\left(\gamma_Y(0)+2\sum_{k=1}^n\left(1-\frac{k}{n}\right)\gamma_Y(k)\right)\nonumber\\
&=&n\left(\sum_{k=-n}^n\gamma_Y(k)\right)-2\sum_{k=1}^nk\gamma_Y(k)\nonumber\\
&=&2n\sum_{k=n+1}^\infty k^{-2}h(k)-2\sum_{k=1}^nk^{-1}h(k)\label{karamata}\\
&\sim&2(4\sigma^2c(-1/2))=\delta(-1/2)\log(n),\qquad\textrm{as }n\to\infty,\nonumber
\end{eqnarray}
since $h(n)=\delta(-1/2)\left(1+o(1)\right):=\delta(1/2)+h_1(n),$ where $h_1(n)\to0$ as $n\to\infty$.  We used Karamata Theorem for the first sum in  (\ref{karamata}) to say that it is equivalent to $n^{-1}h(n)$ and hence the first term  is converging to a positive constant. The second sum in (\ref{karamata}) is equivalent to
$$
\delta(1/2)\int_1^nx^{-1}dx+\int_1^nx^{-1}h_1(x)dx=\delta(1/2)\log n+\int_1^nx^{-1}h_1(x)dx.
$$
For  certain positive constant $A$, we have
\begin{eqnarray*}
    \int_1^nx^{-1}h_1(x)dx&=&\int_1^{\log n}x^{-1}h_1(x)dx+\int_{\log n}^nx^{-1}h_1(x)dx\\
    &\le& A\log(\log n)+(\log n)\underset{x\ge \log n}{\sup}h_1(x)=o(\log n).
\end{eqnarray*}
\end{proof} 
{\bf Proof of Lemma \ref{finite dim}}
\begin{proof}
{\bf Case $\mathbf{d=1/2}$}
Applying Lemma \ref{unit root}, and using the fact that $ab=1/2\left(a^2+b^2-(a-b)^2\right)$, the stationarity of the increments of $S_k(Y)$ and (\ref{sumzero2}) we obtain that for every fixed $i,j=1,\ldots,s$ and as $n\to\infty$, 
\begin{eqnarray*}\lefteqn{
    \textrm{Cov}\left[\left(\frac{1}{n}\sum_{k=1}^n\cos(k\lambda_{i})S_k(Y)\right),\left(\frac{1}{n}\sum_{k=1}^n\cos(k\lambda_{j})S_k(Y)\right)\right]}\\
    &&=\frac{1}{n^2}\sum_{k=1}^n\sum_{k'=1}^n\cos\left(\frac{2\pi ik}{n}\right)\cos\left(\frac{2\pi jk}{n}\right)\mathbb{E}\left(S_k(Y)S_{k'}(Y)\right)\\ 
    &&=\frac{1}{2}\frac{1}{n^2}\sum_{k=1}^n\sum_{k'=1}^n\cos\left(\frac{2\pi ik}{n}\right)\cos\left(\frac{2\pi jk}{n}\right)(-\mathbb{E}(S^2_{\vert k-k'\vert}(Y))\\
    &&\sim\frac{\delta(1/2)}{2}\frac{1}{n^2}\sum_{k=1}^n\sum_{k'=1}^n\cos\left(\frac{2\pi ik}{n}\right)\cos\left(\frac{2\pi jk}{n}\right)\left(-\log\left(\frac{\vert k-k'\vert}{n}\right)\right)\\
&&\to\frac{\delta(1/2)}{2}\int_{[0,1]^2]}\cos(2\pi ix)\cos(2\pi jy)(-\log(\vert x- y\vert)dxdy
\end{eqnarray*}
and similarly
\begin{eqnarray*}\lefteqn{
    \textrm{Cov}\left[\left(\frac{1}{n}\sum_{k=1}^n\sin(k\lambda_{i})S_k(Y)\right),\left(\frac{1}{n}\sum_{k=1}^n\sin(k\lambda_{j})S_k(Y)\right)\right]}\\
&&\to\frac{\delta(1/2)}{2}\int_{[0,1]^2]}\sin(2\pi ix)\sin(2\pi jy)(-\log(\vert x- y\vert)dxdy,
\end{eqnarray*}
\begin{eqnarray*}\lefteqn{
    \textrm{Cov}\left[\left(\frac{1}{n}\sum_{k=1}^n\cos(k\lambda_{i})S_k(Y)\right),\left(\frac{1}{n}\sum_{k=1}^n\sin(k\lambda_{j})S_k(Y)\right)\right]}\\
   && \to\frac{\delta(1/2)}{2}\int_{[0,1]^2]}\cos(2\pi ix)\sin(2\pi jy)(-\log(\vert x- y\vert)dxdy=0.
   \end{eqnarray*}
   We can write,   with the convention that $a_h=0$ for $h<0$
\begin{eqnarray}\lefteqn{
\frac{1}{n}\sum_{k=1}^n\cos(k\lambda_{i})S_k(Y)
=\sum_{i=1}^n\left(\frac{1}{n}\sum_{k=1}^{i-1}\cos(k\lambda_j)\right)Y_i}\label{develop}\\
&&=
\sum_{i=1}^n\left(\frac{1}{n}\sum_{k=i}^n\cos(k\lambda_j)\right)
\sum_{u=-\infty}^ia_{i-u}\epsilon_u\nonumber\\
&&=\sum_{u=-\infty}^n\left(\
\frac{1}{n}\sum_{k=1}^n\left(\sum_{i=1}^ka_{i-u}\right)\cos(k\lambda_j)\right)\epsilon_u
:=\sum_{u=-\infty}^nd^{(c)}_{n,u}\epsilon_u.\label{combine}
\end{eqnarray}
For $u\le0$, and since
$$
\sum_{i=0}^\infty a_i=0,
$$
$$
d^{(c)}_{n,u}=\left(\
\frac{1}{n}\sum_{k=1}^n\left(\sum_{i=0}^{k-u}a_i\right)\cos(k\lambda_j)\right)=
-\frac{1}{n}\sum_{k=1}^n\left(\sum_{i=k-u+1}^\infty a_i\right)\cos(k\lambda_j)
$$
and hence, since $d-1=-1/2$,  $\vert a_k\vert\sim c k^{-3/2}$, as $k\to\infty$,
so that
$$
\sum_{i=\ell}^\infty \vert a_i\vert \sim2c\ell^{-1/2},\qquad\textrm{as }\ell\to\infty,
$$
and hence 
$$
\vert d^{(c)}_{n,u}\vert\le 2\vert c\vert n^{-1/2}\to0,\qquad n\to\infty.
$$
Now consider the case $0<u\le n$.
$$
d^{(c)}_{n,u}=\frac{1}{n}\sum_{k=u}^n\left(\sum_{i=0}^{k-u}a_i\right)\cos(k\lambda_j)
$$
Here again, we get
$$
\vert d^{(c)}_{n,u}\vert\le 2\vert c\vert n^{-1/2}\to0\qquad\textrm{as }n\to\infty,
$$
and hence we have shown that 
\begin{equation}\label{uniform u}
   \underset{u\le n}{\sup}\,\,d^{(c)}_{n,u}\to0 \qquad\textrm{as }n\to\infty,
\end{equation}
and of course, as $n\to\infty$,
$$
\sum_{u=-\infty}^n\left(d^{(c)}_{n,u}\right)^2=\mathbb{E}\left(\frac{1}{n}\sum_{k=1}^n\cos(k\lambda_{j})S_k(Y)\right)\to\Sigma^{(c)}_{jj}(1/2)>0.
$$
and similarly we obtain 
$$
d^{(s)}_{n,u}:=\frac{1}{n}\sum_{k=1}^n\left(\sum_{i=1}^ka_{i-u}\right)\sin(k\lambda_j)\to0\quad\textrm{uniformly in }u\textrm{ as }n\to\infty.
$$
and that
$$
\sum_{u=-\infty}^n\left(d^{(s)}_{n,u}\right)^2=\mathbb{E}\left(\frac{1}{n}\sum_{k=1}^n\sin(k\lambda_{j})S_k(Y)\right)\to\Sigma^{(s)}_{jj}(1/2)>0.
$$
Since $\Sigma(1/2)$ is invertible by virtue of Lemma \ref{invertible} and reasoning similarly as in the proof of  Theorem 4.3.2 of \cite{MR2977317}, then we obtain, as $n\to\infty$,
\begin{eqnarray*}\lefteqn{
\Bigg[\left(\frac{1}{n}\sum_{k=1}^n\cos(k\lambda_{1})S_k(Y)\right),
\cdots,\left(\frac{1}{n}\sum_{k=1}^n\cos(k\lambda_{s})S_k(Y)\right),}\\
&& \left(\frac{1}{n}\sum_{k=1}^n\sin(k\lambda_{i})S_k(Y)\right),\cdots,\left(\frac{1}{n}\sum_{k=1}^n\sin(k\lambda_{s})S_k(Y)\right)\Bigg]\overset{d}{\rightarrow}\mathcal{N}\left((0,\delta(1/2)\Sigma(1/2)\right). 
\end{eqnarray*}

{\bf Case $\mathbf{1/2<d<3/2}$:} In this case, the differenced process $Y_k$ is an I(0) process with memory parameter $d-1\in(-1/2,1/2)$, so that 
$$\textrm{Var}(S_n(Y))\sim\delta(d-1)n^{2(d-1)+1}=\delta(d-1) n^{2d-1},
$$
where $B$ is the $\beta$ function, extended to negative numbers in $(-1,0)$. Then we get
\begin{eqnarray}\label{cumsum}\lefteqn{
    \textrm{Cov}\left[\left(\frac{1} {n^{1/2+d}}\sum_{k=1}^n\cos(k\lambda_{i})S_k(Y)\right),\left(\frac{1}{n^{1/2+d}}\sum_{k=1}^n\cos(k\lambda_{j})S_k(Y)\right)\right]}\nonumber\\
    &&=\frac{1}{n^{1+2d}}\sum_{k=1}^n\sum_{k'=1}^n\cos\left(\frac{2\pi ik}{n}\right)\cos\left(\frac{2\pi jk'}{n}\right)\mathbb{E}\left(S_k(Y)S_{k'}(Y)\right)\nonumber\\ 
    &&=\frac{1}{2}\frac{1}{n^{1+2d}}\sum_{k=1}^n\sum_{k'=1}^n\cos\left(\frac{2\pi ik}{n}\right)\cos\left(\frac{2\pi jk'}{n}\right)(-\mathbb{E}(S^2_{\vert k-k'\vert}(Y))\nonumber\\
    &&\sim\frac{\delta(d-1)}{2}\frac{1}{n^{1+2d}}\sum_{k=1}^n\sum_{k'=1}^n\cos\left(\frac{2\pi ik}{n}\right)\cos\left(\frac{2\pi jk'}{n}\right)\left(-\left(\frac{\vert k-k'\vert}{n}\right)^{2d-1}\right)n^{2d-1}\nonumber\\
&&\to-\frac{\delta(d-1)}{2}\int_{[0,1]^2]}\cos(2\pi ix)\cos(2\pi jy)(\vert x- y\vert^{2d-1})dxdy.
\end{eqnarray}
Since 
$$
\frac{1} {n^{1/2+d}}\sum_{k=1}^n\cos(k\lambda_{j})S_k(Y)=\frac{1} {n^{1/2+d}}\sum_{k=1}^n\left(\sum_{u=1}^{k-1}\cos(u\lambda_{j})\right)Y_k
$$
and that because $d>1/2$
$$
\underset{k\le n}{\sup}\vert z_{n,k}\vert:=\underset{k\le n}{\sup}\frac{1} {n^{1/2+d}}\left\vert\sum_{u=1}^{k-1}\cos(u\lambda_{j})\right\vert\to0
$$
and
$$
\sum_{k=1}^nz_{n,k}^2\le\frac{1}{n^{2d-1}}\to0,
$$
then by Theorem 4.3.2 of \cite{MR2977317} and Lemma \ref{invertible},  we obtain part (1) of the Lemma \ref{finite dim}, i.e. when  $1/2<d<3/2$.\\\\
\iffalse  
{\bf Case $0< d<1/2$:}  $X_k$ is an I(0) process with memory parameter $d$ and, with $S_n=X_1+\cdots+X_n$, we get 
$$
\textrm{Var}(S_n)=\delta(d)n^{1+2d},\qquad\gamma_X(k-k')=\mathbb{E}(X_kX_k')\sim C(d)(1+\vert k-k'\vert)^{2d-1},
$$
and hence we get
\begin{eqnarray}\label{match1}\lefteqn{
    \textrm{Cov}\left[\left(\frac{1} {n^{1/2+d}}\sum_{k=1}^n\cos(k\lambda_{i})X_k\right),\left(\frac{1}{n^{1/2+d}}\sum_{k'=1}^n\cos(k'\lambda_{j})X_k'\right)\right]}\nonumber\\
    &&=\frac{1}{n^{1+2d}}\sum_{k=1}^n\sum_{k'=1}^n\cos\left(\frac{2\pi ik'}{n}\right)\cos\left(\frac{2\pi jk'}{n}\right)\mathbb{E}\left(X_kX_{k'}\right)\nonumber\\ 
    &&=\frac{1}{n^{1+2d}}\sum_{k=1}^n\sum_{k'=1}^n\cos\left(\frac{2\pi ik}{n}\right)\cos\left(\frac{2\pi jk'}{n}\right)\gamma_X(\vert k-k'\vert)\nonumber\\
    &&\sim\frac{C(d)}{n^{1+2d}}\sum_{k=1}^n\sum_{k'=1}^n\cos\left(\frac{2\pi ik}{n}\right)\cos\left(\frac{2\pi jk'}{n}\right)\left(\frac{1+\vert k-k'\vert}{n}\right)^{2d-1}n^{2d-1}\nonumber\\
&&\to C(d)\int_{[0,1]^2}\cos(2\pi ix)\cos(2\pi jy)\vert x- y\vert^{2d-1}dxdy,
\end{eqnarray}
where $C(d)$ is defined in (\ref{gammad}).\\
{\bf Case $d=0$:} Assumption (\ref{summable}) implies that $X_n$ has a continuous spectral density $f$ and 
$$
f(0)=\frac{1}{2\pi}\left(\sum_{j=0}^\infty a_j\right)^2:=\frac{c}{2\pi}>0,
$$
and hence by Theorem 5.3.1 of  \cite{MR2977317} (used for $d=0)$ implies that the sequence of random vectors in part (2) of the Lemma  \ref{finite dim} will converge to a $\mathcal{N}(0,\frac{c}{2}Id_{2s})$, where $Id_{2s}$ is the identity matrix with dimension $2s$.\\
\fi
{\bf Case $\mathbf{-1/2<d<1/2}$:} 
Before proceeding with the proof of this case, we note that \cite{MR1243575} established a smilar  result for the covariance matrix $\Sigma(d)$ (with different form of the integral) under the rather restrictive assumption that the spectral density $f$ is of the form
\[
f(\lambda)=|\lambda|^{-2d}f^*(\lambda), \qquad f^* \text{ continuous and positive},
\]
a condition on which their proof heavily relies and that we do not impose here.\\
Writing $X_k=S_k-S_{k-1}$ where, being the sum, $S_k$ is an I(1) process with memory parameter $1/2<d+1<3/2$, and using summation by parts, we obtain that 
\begin{eqnarray}\label{expansion}\lefteqn{
\sum_{k=1}^n\cos(k\lambda_{j})X_k=(\cos(\lambda_j))(S_n-X_0)-\sum_{k=1}^n\left[\cos((k+1)\lambda_j)-\cos(k\lambda_j\right]S_k}\nonumber\\
&&=(\cos(\lambda_j))S_n+
\sum_{k=1}^n\left[\cos(k\lambda j)(1-\cos(\lambda j))+\sin(k\lambda j)\sin(\lambda j)\right]S_k.
\end{eqnarray}
Hence,  applying (\ref{cumsum}) with $d+1$ instead of $d$, we obtain that, as $n\to\infty$, (since  $\cos(\lambda_j)\sim1$, $1-\cos(\lambda j)\sim(2\pi j/n)^2/2$ and $\sin(\lambda j)\sim2\pi j/n$),
\begin{eqnarray}\label{match2}\lefteqn{
    \textrm{Cov}\left[\left(\frac{1} {n^{1/2+d}}\sum_{k=1}^n\cos(k\lambda_{i})X_k\right),\left(\frac{1}{n^{1/2+d}}\sum_{k=1}^n\cos(k\lambda_{j})X_k\right)\right]}\\
&&
\to\delta(d)\Bigg[1+\frac{1}{2}\int_0^1\left(2\pi i\sin(2\pi ix)+2\pi j\sin(2\pi jx)\right)\left(x^{2d+1}-(1-x)^{2d+1}\right)dx\nonumber\\
&&-\frac{(2\pi i)(2\pi j)}{2}\int_{[0,1]^2}\sin(2\pi ix)\sin(2\pi jy)\vert x- y\vert^{2d+1}dxdy\Bigg]\nonumber\\
&&=\delta(d)\Bigg[-1+(2d+1)\int_0^1x^{2d}\left(\cos(2\pi ix)+\cos(2\pi jx)\right)dx\nonumber\\
&&-\frac{(2\pi i)(2\pi j)}{2}\int_{[0,1]^2}\sin(2\pi ix)\sin(2\pi jy)\vert x- y\vert^{2d+1}dxdy\Bigg]
\end{eqnarray}
and similarly we obtain that
\begin{eqnarray*}\lefteqn{
    \textrm{Cov}\left[\left(\frac{1} {n^{1/2+d}}\sum_{k=1}^n\sin(k\lambda_{i})X_k\right),\left(\frac{1}{n^{1/2+d}}\sum_{k=1}^n\sin(k\lambda_{j})X_k\right)\right]}\\
&&\to-\delta(d)(2\pi^2 ij)\int_{[0,1]^2}\cos(2\pi ix)\cos(2\pi jy)\vert x- y\vert^{2d+1}dxdy.
\end{eqnarray*}
The sequence of coefficients 
$$
z^{(c)}_{n,k}:=\frac{\cos(2\pi jk)}{n^{1/2+d}}\qquad\textrm{and }z^{(s)}_{n,k}:=\frac{\sin(2\pi jk)}{n^{1/2+d}},\qquad k=1,\ldots,n,\qquad j=1,\ldots,s,
$$
in the sequence of random vectors in part (2) of the Lemma \ref{finite dim} clearly 
satisfy conditions (i) and (iii) (respectively (ii)) of Proposition 4.3.1 of  \cite{MR2977317} when $0<d<1/2$ (respectively $-1/2<d<0$) and hence by Theorem 4.3.2 of the same reference, we obtain the result in part (2) of the Lemma \ref{finite dim}. This completes the proof of the Lemma.\\ 
\iffalse
One last thing: confirm that the formulae in (\ref{match1}) and (\ref{match2}) do match for $0<d<1/2$: Using Leibniz formula:
$$
\frac{\partial}{\partial x}\int_a^xf(y,x)dy=f(x,x)+\int_a^x\frac{\partial}{\partial x}f(y,x)
$$
and
$$
\frac{\partial}{\partial y
}\int_y^af(y,x)dx=-f(x,x)+\int_y^a\frac{\partial}{\partial y}f(y,x)dx,
$$
we get, since here $f(x,x)=0$,
\begin{eqnarray*}\lefteqn{
I= \frac{(2\pi i)(2\pi j)}{2} \int_{[0,1]^2}
\sin(2\pi i x)\sin(2\pi j y)\vert x-y\vert^{2d+1}dxdy}\\
&&=\int_0^1 \left(\int_0^x (2\pi j)\sin(2\pi j y)(x-y)^{2d+1}\,dy\right)(2\pi i)\sin(2\pi ix)\,dx:=\int_0^1h(x)(2\pi i)\sin(2\pi ix)\,dx\\
&&=
\left[h(x)(-\cos(2\pi ix)\right]_0^1+\int_0^1h'(x)\cos(2\pi ix)dx\\
&&=-h(1)+(2d+1)\int_0^1\left(\int_0^x2\pi j(\sin(2\pi jy)(x-y)^{2d}dy\right)\cos(2\pi ix)dx\\
&&=\int_0^1(-2\pi j)\sin(2\pi jy)(1-y)^{2d+1}dy+(2d+1)\int_0^1\left(\int_y^1\cos(2\pi ix)(x-y)^{2d}dx\right)2\pi j\sin(2\pi jy)dy\\
&&:=
\int_0^1(2\pi j)\sin(2\pi jy)y^{2d+1}dy+(2d+1)\int_0^1\ell(y)2\pi j\sin(2\pi jy)dy\\
&&=-1+(2d+1)\left[\int_0^1\cos(2\pi jy)y^{2d}dy+\left[\ell(y)(-\cos(2\pi jy))\right]_0^1+\int_0^1\ell'(y)\cos(2\pi jy)dy\right]\\
&&
-1+(2d+1)\left[\int_0^1\cos(2\pi jy)y^{2d}dy+\ell(0)\right]\\
&&-2d(2d+1)\int_0^1\left(\int_y^1\cos(2\pi ix)(x-y)^{2d-1}dx\right)\cos(2\pi jy)dy\\
&&=-1+
(2d+1)\left[\int_0^1(\cos(2\pi jy)+\cos(2\pi iy))y^{2d}dy\right]\\
&&
-d(2d+1)\int_{[0,1]^2}\cos(2\pi ix)\cos(2\pi jy)\vert x-y\vert^{2d-1}dxdy.
\end{eqnarray*}
This shows that that (\ref{match1}) and (\ref{match2}) are matching when $0<d<1/2$.
\fi

\end{proof}

{\bf Proof of Lemma \ref{denom}} \begin{proof}
Observe that $\frac{I_{n,h}(\lambda'_j)}{(\lambda'_j)^{-1}}$
$h=1,\ldots,m$ are identically distributed, so that we have for fixed $j=1,\ldots,s$, as $n\to\infty$, 
$$
\mathbb{E}\left(\frac{1}{m}\sum_{h=1}^m\frac{I_{n,h}(\lambda'_j)}{(\lambda'_j)^{-2d}}\right)=\mathbb{E}\left(\frac{I_{n,1}(\lambda'_j)}{(\lambda'_j)^{-2d}}\right)=\left((2\pi j)^{2d}\right)\frac{I_{n,1}(\lambda'_j)}{n^{1+2d}}\to\frac{(2\pi j)^{2d}}{2\pi}I_{jj}
$$
by Lemma \ref{finite dim} and hence it will be enough to show that, as $n\to\infty$,
\begin{equation}\label{lim-var}
    \textrm{Var}\left(\frac{1}{m}\sum_{h=1}^m\frac{I_{n,h}(\lambda'_j)}{\ell^{2d}}\right)\to0.
\end{equation}
We can write
 $$
I_{n,h}(\lambda_j')=\frac{1}{2\pi \ell}\left\vert\sum_{t=1}^{\ell}X_ke^{\i t\lambda_j'}\right\vert^2=\frac{1}{2\pi \ell}\left\vert\sum_{t=1}^{\ell}S_t^{(h)}(Y)e^{\i t\lambda_j'}\right\vert^2
 $$
 where, for each $t$ such that  $1\le t\le \ell$,
 $$
 S_t^{(h)}(Y)=\sum_{u=1}^tY_{(h-1)\ell+u},\qquad h=1\ldots,m.
 $$
Let 
$$
U_{n,h}(d)=\begin{cases}
    \frac{1}{\ell^{1/2+d}}\displaystyle\sum_{k=1}^\ell\cos(k\lambda'_j)S_k^{(h)}(Y),&\textrm{if }1/2\le d<3/2,\\\\
    \frac{1}{\ell^{1/2+d}}\displaystyle\sum_{k=1}^\ell\cos(k\lambda'_j)X_k,&\textrm{if }-1/2< d<1/2,
\end{cases}
$$
and
$$
V_{n,h}(d)=\begin{cases}
    \frac{1}{\ell^{1/2+d}}\displaystyle\sum_{k=1}^\ell\sin(k\lambda'_j)S_k^{(h)}(Y),&\textrm{if }1/2\le d<3/2,\\\\
    \frac{1}{\ell^{1/2+d}}\displaystyle\sum_{k=1}^\ell\sin(k\lambda'_j) X_k,&\textrm{if }-1/2< d<1/2,
\end{cases}
$$
and let
$$
d_{\ell,u}^{(c)}=\begin{cases}
\frac{1}{\ell^{1/2+d}}\displaystyle\sum_{k=1}^\ell\left(\sum_{i=1}^ka_{i-u}\right)\cos\left(\frac{2\pi jk}{\ell}\right),&\textrm{if }1/2\le d<3/2,\\\\   
\frac{1}{\ell^{1/2+d}}\displaystyle\sum_{k=1}^\ell a_{k-u}\cos\left(\frac{2\pi jk}{\ell}\right),&\textrm{if }-1/2<d<1/2.
\end{cases}
$$
$$
d_{\ell,h,u}^{(c)}=\begin{cases}
\frac{1}{\ell^{1/2+d}}\displaystyle\sum_{k=1}^\ell\left(\sum_{i=1}^ka_{(h-1)\ell+i-u}\right)\cos\left(\frac{2\pi jk}{\ell}\right),&\textrm{if }1/2\le d<3/2,\\\\ 
\frac{1}{\ell^{1/2+d}}\displaystyle\sum_{k=1}^\ell a_{(h-1)\ell+k-u}\cos\left(\frac{2\pi jk}{\ell}\right),&\textrm{if }-1/2<d<1/2.
\end{cases}
$$
Using the stationarity of $X_k$ (when $-1/2<d<1/2$) and $Y_k=X_k-X_{k-1}$ (when $1/2\le d<3/2)$,  the left hand side of (\ref{lim-var}) can then be written as
\begin{eqnarray*}\lefteqn{
\frac{1}{m^2}\left[\sum_{i=1}^m\sum_{j=1}^m\textrm{Cov}(U_{n,i}^2(d),U_{n,j}^2(d))+\sum_{i=1}^m\sum_{j=1}^m\textrm{Cov}(V_{n,i}^2(d),V_{n,j}^2(d))+2\sum_{i=1}^m\sum_{j=1}^m\textrm{Cov}(U_{n,i}^2(d),V_{n,j}^2(d))\right]}\\
&&=\frac{1}{m}\left[\textrm{Var}(U^2_{n,1}(d))+\textrm{Var}(V_{n,1}^2)+2\textrm{Cov}(U^2_{n,1}(d),V_{n,1}^2(d))\right]+\Bigg[\frac{2}{m}\sum_{h=2}^{m-1}\left(1-\frac{h-1}{m}\right)\\
&&\left[\textrm{Cov}(U^2_{n,1}(d),U^2_{n,h}(d))+\textrm{Cov}(V_{n,1}^2(d),V^2_{n,h}(d))+\textrm{Cov}(U^2_{n,1}(d),V^2_{n,h}(d))+\textrm{Cov}(V_{n,1}^2(d),U^2_{n,h}(d))\right]\Bigg].
\end{eqnarray*}
We have
\begin{equation}\label{4th moment}
\mathbb{E}\left(U_{n,1}^4\right)=
\mathbb{E}(\epsilon_1^4)\sum_{u=-\infty}^\ell d_{\ell,u}^4+\left(\mathbb{E}(\epsilon_1^2)\sum_{u=-\infty}^\ell d_{\ell,u}^2\right)^2\le C\left(\mathbb{E}(\epsilon_1^2)\sum_{u=-\infty}^\ell d_{\ell,u}^2\right)^2\to C\Sigma^{(c)}_{jj}(d)<\infty,
\end{equation}
where $C$ is a generic positive constant,
and similarly  we can show that $\mathbb{E}\left(V_{n,1}^4\right)$ is bounded. It will then be enough to show that, uniformly in $h=1,\ldots,m$, as $n\to\infty$,
\begin{equation*}
\left\vert\textrm{Cov}(U^2_{n,1}(d),U_{n,h}^2(d)\right\vert+\left\vert\textrm{Cov}(U^2_{n,1}(d),V_{n,h}^2(d))\right\vert+\left\vert\textrm{Cov}(V_{n,1}^2(d),U_{n,h}^2(d))\right\vert+\left\vert\textrm{Cov}(V_{n,1}^2(d),V_{n,h}^2(d))\right\vert\to0.
\end{equation*}
We will show that, uniformly in $h$,
\begin{equation}\label{cov-lim0}
    \textrm{Cov}(U^2_{n,1}(d),U_{n,h}^2(d)\to0,\qquad\textrm{as }n\to\infty.
\end{equation}
The other covarainces treat similarly.\\
{\bf Proof of (\ref{cov-lim0}) when $\mathbf{1/2\le d<3/2:}$}
We have for $h\ge2$, 
$$
\textrm{Cov}(U^2_{n,1}(d),U_{n,h}^2(d))=\mathbb{E}(U^2_{n,1}(d)U^2_{n,h}(d))-\left(\mathbb{E}(U^2_{n,1}(d))\right)^2,
$$
\begin{eqnarray}\label{cov-cov}
\mathbb{E}(U^2_{n,1}(d)U_{n,h}^2(d))&=&\mathbb{E}\left[\left(\sum_{u=-\infty}^\ell d_{\ell,u}\epsilon_u\right)^2\left(\sum_{u=-\infty}^{h\ell} d_{\ell,h,u}\epsilon_u\right)^2\right]\\
&=&\mathbb{E}\left[\left(\sum_{u=-\infty}^\ell d_{\ell,u}\epsilon_u\right)^2\left(\sum_{u=-\infty}^{(h-1)\ell} d_{\ell,h,u}\epsilon_u\right)^2\right]+
\left(\mathbb{E} (U^2_{n,1}(d))\right)\sigma^2_\epsilon\sum_{u=(h-1)\ell+1}^{h\ell}d_{\ell,h,u}^2,\nonumber\\
\end{eqnarray}
and clearly 
$$
\sum_{u=(h-1)\ell+1}^{h\ell}d_{\ell,h,u}^2=\sum_{u=0}^\ell d_{\ell,u}^2.
$$
We first show that
\begin{equation}\label{l limit}
\sum_{u=-\infty}^0d_{\ell,u}^2\to0,
\end{equation}
and
\begin{equation}\label{h limit}
\sum_{u=-\infty}^{(h-1)\ell} d_{\ell,h,u}^2\to0\qquad\textrm{uniformly in  }h,\qquad\textrm{as }\ell\to\infty.
\end{equation}
For $d\neq1$, and as $\ell\to\infty$,
\begin{eqnarray*}\lefteqn{
\sum_{u=-\infty}^0d_{\ell,u}^2\le C\ell^{-2d}\sum_{u=0}^\infty\left((1+u)^{d-1}-(1+\ell+u)^{d-1}\right)^2}\\
&&=C\ell^{-2d}\sum_{u=0}^\ell\left((1+u)^{d-1}-(1+\ell+u)^{d-1}\right)^2+C\ell^{-2d}\sum_{u=\ell+1}^\infty\left((1+u)^{d-1}-(1+\ell+u)^{d-1}\right)^2\\
&&\to0,
\end{eqnarray*}
since the first sum in the line above is equivalent to
$$
\frac{1}{\ell}\int_0^1\left(x^{d-1}-(1+x)^{d-1}\right)^2dx\to0,
$$
and the second sum is $O(\ell^{-1})$ by Mean-Value Theorem.\\
When $d=1$ then $a_{i}$ is summable and we get
\begin{eqnarray*}
\sum_{u=-\infty}^0d_{\ell,u}^2&\le& \frac{1}{\ell}\sum_{u=0}^\infty
\left(\sum_{i=1}^\ell a_{i+u}\right)^2\le\frac{C}{\ell}\sum_{i=1}^\ell\sum_{u=0}^\infty a_{i+u}\to0.
\end{eqnarray*} 
Similarly we can show that, as $\ell\to\infty$, for each $h\ge2$,
$$
\sum_{u=-\infty}^0d_{\ell,h,u}^2\to0.
$$
Also, for $d\neq1$,
\begin{eqnarray*}
\sum_{u=0}^{(h-1)\ell} d_{\ell,h,u}^2&\le& \frac{1}{\ell^{2d}}\sum_{u=0}^{(h-1)\ell}
\left(\sum_{i=1}^\ell a_{(h-1)\ell+i-u}\right)^2\\
&\le&\frac{1}{\ell^{2d}}\sum_{u=1}^{(h-1)\ell}\left(1+(h-1)\ell-u)^{d-1}-(1+h\ell-u)^{d-1}\right)^2\\
&\le&\begin{cases} 
    \frac{1}{\ell^{2d}}\displaystyle\sum_{u=0}^{(h-1)\ell}[1+(h-1)\ell-u]^{d-2}\ell=O\left(\ell^{1-2d}\right)&\textrm{if }1/2<d<1,\\\\
  \frac{1}{\ell^{2d}}\displaystyle\sum_{u=0}^{(h-1)\ell}[1+(h-1)\ell-u]^{2d-4}\ell^2=O\left(\ell^{2-2d}\right),&\textrm{if }1<d<3/2.
\end{cases}
\end{eqnarray*} 
When $d=1$, then $a_i$ are summable and therefore as $\ell\to\infty$,
$$
\sum_{u=0}^{(h-1)\ell} d_{\ell,h,u}^2\le\frac{1}{\ell}\sum_{i=1}^\ell\sum_{u=0}^\infty a_{i+u}\to0.
$$
So, we get
\begin{equation}\label{limh h}
\sum_{u=-\infty}^{(h-1)\ell}d_{\ell,h,u}^2\to0,\qquad\textrm{as }\ell\to\infty.
\end{equation}
This completes the proofs of (\ref{l limit}) and (\ref{h limit})  in the case $1/2\le d<3/2$.
Now, using (\ref{4th moment}),
\begin{eqnarray*}\lefteqn{
\left(\mathbb{E}\left[\left(\sum_{u=-\infty}^\ell d_{\ell,u}^{(c)}\epsilon_u\right)^2\left(\sum_{u=-\infty}^{(h-1)\ell} d_{\ell,h,u}^{(c)}\epsilon_u\right)^2\right]\right)^2}\\
&&\le
\mathbb{E}\left[\left(\sum_{u=-\infty}^\ell d_{\ell,u}^{(c)}\epsilon_u\right)^4\right]\mathbb{E}\left[\left(\sum_{u=-\infty}^{(h-1)\ell} d_{\ell,h,u}^{(c)}\epsilon_u\right)^4\right]\\
&&\le C\left(\sum_{u=-\infty}^\ell d_{\ell,u}^2\right)^2\left(\sum_{u=-\infty}^{(h-1)\ell}d_{\ell,h,u}^2\right)^2\to0
\end{eqnarray*}
by (\ref{limh h}). Therefore, using (\ref{cov-cov}), we conclude  the proof of (\ref{cov-lim0}) when $1/2\le d<3/2$. The Same proof can be used  to show that
\begin{equation}\label{vn-vn}
\textrm{Cov}(V^2_{n,1}(d),V_{n,h}^2(d))\to0,\qquad\textrm{as }n\to\infty
\end{equation}
holds true when $1/2\le d<3/2.$\\\\
{\bf Proof of (\ref{cov-lim0}) when $\mathbf{-1/2<d<1/2}$:}  Using the representation (\ref{expansion}) with $\ell$ instead of $n$ and $\lambda'_j$ instead of $\lambda_j$, we obtain that for $h\ge1$,
$$
U_{n,h}(d)=\cos(\lambda'_j)(S_\ell^{(h)}-X_{h-1})+\ell(1-\cos(\lambda'_j))U_{n,h}(d+1)+\ell\sin(\lambda'_j)V_{n,h}(d+1),
$$
and
$$
V_{n,h}(d)=\sin(\lambda'_j)(S_\ell^{(h)}-X_{h-1})-\ell(1-\cos(\lambda'_j))V_{n,h}(d+1)-\ell\sin(\lambda'_j)U_{n,h}(d+1),
$$
and therefore we obtain that, as $\ell\to\infty$, uniformly in $h\ge2$,
\begin{eqnarray}\label{final cov}
\textrm{Cov}\left(U^2_{n,1}(d),U_{n,h}^2(d)\right)
&\sim&\textrm{Cov}\left(\frac{S^2_\ell} {n^{2d+1}},\frac{\left(S_\ell^{(h)}(X)\right)^2}{n^{2d+1}}\right)+
(2\pi j)^2\textrm{Cov}\left(V^2_{n,1}(d+1),V^2_{n,h}+
(d+1)\right)\nonumber\\
&&+(2\pi j)^2\textrm{Cov}\left(\frac{S_\ell}{n^d+1/2}V_{n,1}(d+1),\frac{S_\ell^{(h)}(X)}{n^{d+1/2}}V_{n,h}(d+1)\right).
\end{eqnarray}
The middle covariance in the right hand side is already shown to converge to zero as $n\to\infty$ as per (\ref{vn-vn}). The first covariance can also be shown to converge to zero in a similar way (as the proof does not rely on $\lambda_j$. For the last one, similar decomposition as in (\ref{cov-cov}) can be used to show that, uniformly in $h\ge2$,
\begin{eqnarray*}\lefteqn{
\mathbb{E}\left(\frac{S_\ell}{\ell^{d+1/2}}V_{n,1}(d+1)\frac{S_\ell^{(h)}(X)}{\ell^{d+1/2}}V_{n,h}(d+1)\right)}\\
&&-\left[\mathbb{E}\left(\frac{S_\ell}{\ell^{d+1/2}}V_{n,1}(d+1)\right)\right]\sigma^2_\epsilon\sum_{u=(h-1)\ell}^{h\ell}\sum_{v=(h-1)\ell}^{h\ell}d^{(s)}_{n,h,u}d'_{n,h,u}\to0,\nonumber
\end{eqnarray*}
where
$$
d^{(s)}_{n,h,u}=\frac{1}{\ell^{1/2+d}}\displaystyle\sum_{k=1}^\ell\left(\sum_{i=1}^ka_{(h-1)\ell+i-u}\right)\sin\left(\frac{2\pi jk}{\ell}\right)
$$
and
$$
d'_{n,h,u}=\sum_{i=1}^\ell(\ell-i)a_{(h-1)\ell+i-u},
$$
and that
$$
\sigma^2_\epsilon\sum_{u=(h-1)\ell}^{h\ell}\sum_{v=(h-1)\ell}^{h\ell}d^{(s)}_{n,h,u}d'_{n,h,u}\to\underset{n\to\infty}{\lim}\mathbb{E}\left(\frac{S_\ell}{n^d+1/2}V_{n,1}(d+1)\right)=c(d)\sigma^2_\epsilon\int_0^1x^{2d+1}\sin(2\pi jx)dx,
$$
so that the last covariance in (\ref{final cov}) tends to zero as $n\to\infty$, uniformly in $h\ge2$.
\end{proof}
\bibliographystyle{apalike}
\bibliography{jtsa}
\end{document}